\pgfplotsset{compat=1.5}
\newtheorem{theorem}{Theorem}[section]
\newtheorem{lemma}[theorem]{Lemma}
\newtheorem{definition}[theorem]{Definition}
\newenvironment{proofof}[1]{\begin{trivlist} \item {\bf Proof
#1:~~}}
  {\qed\end{trivlist}}
\newcommand{\namedref}[2]{\hyperref[#2]{#1~\ref*{#2}}}
\newcommand{\thmlab}[1]{\label{thm:#1}}
\newcommand{\thmref}[1]{\namedref{Theorem}{thm:#1}}
\newcommand{\lemlab}[1]{\label{lem:#1}}
\newcommand{\lemref}[1]{\namedref{Lemma}{lem:#1}}
\newcommand{\seclab}[1]{\label{sec:#1}}
\newcommand{\secref}[1]{\namedref{Section}{sec:#1}}
\newcommand{\alglab}[1]{\label{alg:#1}}
\renewcommand{\algref}[1]{\namedref{Algorithm}{alg:#1}}
\newcommand{\tablelab}[1]{\label{tab:#1}}
\newcommand{\deflab}[1]{\label{def:#1}}
\newcommand{\PPr}[1]{\ensuremath{\mathbf{Pr}\left[#1\right]}}
\newcommand{\Ex}[1]{\ensuremath{\mathbb{E}\left[#1\right]}}
\newcommand{\Var}[1]{\ensuremath{\mathbb{V}\left[#1\right]}}
\newcommand{\cO}[1]{\ensuremath{\mathcal{O}\left(#1\right)}}
\newcommand{\tO}[1]{\ensuremath{\tilde{\mathcal{O}}\left(#1\right)}}
\newcommand{\eps}{\varepsilon}
\def \SampleAndHold    {\mdef{\textsc{SampleAndHold}}}
\def \FullSampleAndHold    {\mdef{\textsc{FullSampleAndHold}}}
\def \calA    {\mdef{\mathcal{A}}}
\def \calD    {\mdef{\mathcal{D}}}
\def \calE    {\mdef{\mathcal{E}}}
\def \calO    {\mdef{\mathcal{O}}}
\def \calS    {\mdef{\mathcal{S}}}
\def \calT    {\mdef{\mathcal{T}}}
\newcommand{\mdef}[1]{{\ensuremath{#1}}\xspace}  
\DeclareMathOperator*{\polylog}{polylog}
\DeclareMathOperator*{\poly}{poly}
\DeclareMathOperator*{\median}{median}
\DeclareMathOperator*{\Uni}{Uni}
\newcommand{\flr}[1]{\mdef{\left\lfloor#1\right\rfloor}}              
\newcommand{\ignore}[1]{}
\newif\ifnotes\notestrue 
\newcommand{\samson}[1]{\textcolor{purple}{{\bf (Samson:} {#1}{\bf ) }} \marginpar{\tiny\bf
             \begin{minipage}[t]{0.5in}
               \raggedright S:
            \end{minipage}}}            							
\newcommand{\samson}[1]{}
\renewcommand*{\@fnsymbol}[1]{\textcolor{mahogany}{\ensuremath{\ifcase#1\or *\or \dagger\or \ddagger\or
 \mathsection\or \triangledown\or \mathparagraph\or \|\or **\or \dagger\dagger
   \or \ddagger\ddagger \else\@ctrerr\fi}}}
\providecommand{\email}[1]{\href{mailto:#1}{\nolinkurl{#1}\xspace}}
\definecolor{mahogany}{rgb}{0.75, 0.25, 0.0}
\definecolor{darkblue}{rgb}{0.0, 0.0, 0.55}
\definecolor{darkpastelgreen}{rgb}{0.01, 0.75, 0.24}
\definecolor{bleudefrance}{rgb}{0.19, 0.55, 0.91}
\definecolor{darkgreen}{rgb}{0.0, 0.2, 0.13}
\definecolor{darkgoldenrod}{rgb}{0.72, 0.53, 0.04}
\definecolor{darkred}{rgb}{0.55, 0.0, 0.0}
  \DeclareFontShape{T1}{lmr}{m}{scit}{<->ssub*lmr/m/scsl}{}%
\begin{document}

\title{Streaming Algorithms with Few State Changes}
\date{\today}
\author{
Rajesh Jayaram\thanks{Google Research. 
E-mail: \email{rkjayaram@google.com}}
\and
David P. Woodruff\thanks{Carnegie Mellon University and Google Research. 
E-mail: \email{dwoodruf@cs.cmu.edu}}
\and
Samson Zhou\thanks{Texas A\&M University. Work done in part while at UC Berkeley and Rice University.  
E-mail: \email{samsonzhou@gmail.com}}
}

\maketitle

\begin{abstract}
In this paper, we study streaming algorithms that minimize the number of changes made to their internal state (i.e., memory contents). While the design of streaming algorithms typically focuses on minimizing space and update time, these metrics fail to capture the asymmetric costs, inherent in modern hardware and database systems, of reading versus writing to memory. In fact, most streaming algorithms write to their memory on \emph{every update}, which is undesirable when writing is significantly more expensive than reading. This raises the question of whether streaming algorithms with small space \emph{and} number of memory writes are possible.

We first demonstrate that, for the fundamental $F_p$ moment estimation problem with $p\ge 1$, any streaming algorithm that achieves a constant factor approximation must make $\Omega(n^{1-1/p})$ internal state changes, regardless of how much space it uses. Perhaps surprisingly, we show that this lower bound can be matched by an algorithm that also has near-optimal space complexity. Specifically, we give a $(1+\varepsilon)$-approximation algorithm for $F_p$ moment estimation that uses a near-optimal $\widetilde{\mathcal{O}}_\varepsilon(n^{1-1/p})$ number of state changes, while simultaneously achieving near-optimal space, i.e., for $p\in[1,2]$, our algorithm uses $\text{poly}\left(\log n,\frac{1}{\varepsilon}\right)$ bits of space, while for $p>2$, the algorithm uses $\widetilde{\mathcal{O}}_\varepsilon(n^{1-2/p})$ space. We similarly design streaming algorithms that are simultaneously near-optimal in both space complexity and the number of state changes for the heavy-hitters problem, sparse support recovery, and entropy estimation. Our results demonstrate that an optimal number of state changes can be achieved without sacrificing space complexity. 
\end{abstract}

\section{Introduction}
The streaming model of computation is a central paradigm for computing statistics for datasets that are too large to store. Examples of such datasets include internet traffic logs, IoT sensor networks, financial transaction data, database logs, and scientific data streams (such as huge experiments in particle physics, genomics, and astronomy). 
In the one-pass streaming model, updates to an underlying dataset are processed by an algorithm one at a time, and the goal is to approximate, collect, or compute some statistic of the dataset while using space that is sublinear in the size of the dataset (see \cite{babcock2002models,muthukrishnan2005data} for surveys).

Formally, an insertion-only data stream is modeled by a sequence of updates $u_1,\ldots,u_m$, each of the form $u_t\in[n]$ for $t\in[m]$, where $[n]=\{1,\ldots,n\}$ is the universe size. 
The updates implicitly define an underlying frequency vector $f\in\mathbb{R}^n$ by $f_i=|\{t\,\mid\,u_t=i\}|$, so that the value of each coordinate of the frequency vector is the number of occurrences of the coordinate identity in the data stream. 

One of the most fundamental problems in the streaming literature is to compute a $(1+\eps)$ approximation of the $F_p$ moment of the underlying frequency vector $f$, defined by $F_p(f)=(f_1)^p+\ldots+(f_n)^p$, where $\eps>0$ is an accuracy parameter. 
The frequency moment estimation problem has been the focus of more than two decades of study in the streaming model~\cite{AlonMS99,Bar-YossefJKS04,Woodruff04,IndykW05,Indyk06,Li08,KaneNW10,LiW13,BlasiokDN17,BravermanVWY18,GangulyW18,JayaramW18b,JayaramW19,WoodruffZ21,WoodruffZ21b,AjtaiBJSSWZ22,Ben-EliezerJWY22}. 
In particular, $F_p$-estimation is used for $p=0.25$ and $p=0.5$ in mining tabular data~\cite{CormodeIKM02}, for $p=1$ in network traffic monitoring~\cite{FeigenbaumKSV02} and dynamic earth-mover distance approximation~\cite{Indyk04}, and for $p=2$ in estimating join and self-join sizes~\cite{AlonGMS02} and in detecting network anomalies~\cite{ThorupZ04}. 
Note that a $(1+\eps)$-approximation to the $F_p$ moment also gives a $(1+\cO{\eps})$-approximation to the $L_p$ norm of $f$, defined by 
\[L_p(f)=\|f\|_p=\left(F_p(f)\right)^{1/p}=\left((f_1)^p+\ldots+(f_n)^p\right)^{1/p}.\]

Another fundamental streaming problem is to compute $L_p$-heavy hitters: given a threshold parameter $\eps\in(0,1]$, the $L_p$-heavy hitters problem is to output a list $L$ containing all $j\in[n]$ such that $f_j\ge\eps\cdot\|f\|_p$, and no $j\in[n]$ with $f_j<\frac{\eps}{2}\cdot\|f\|_p$. 
The heavy-hitter problem is used for answering iceberg queries~\cite{FangSGMU98} in database systems, finding elephant flows and spam prevention in network traffic monitoring~\cite{Ben-BasatEFK17a}, and perhaps has an even more extensive history than the $F_p$ moment estimation problem~\cite{MisraG82,Moore91,CharikarCF04,CormodeM05,MetwallyAA05,MankuM12,Indyk13,BravermanCIW16,LarsenNNT16,BravermanCINWW17,BravermanGLWZ18,LiNW18,IndykNW22,BlockiLMZ23,LebedaT23}.

The primary goal of algorithmic design for the streaming model is to minimize the space and update time of the algorithm. 
However, the generic per-update processing time fails to capture the nuanced reality of many modern database and hardware systems, where the \emph{type} of updates that are made on a time step matter significantly for the real-world performance of the algorithm. 
Specifically, it is typically the case that updates that only require \textit{reads} to the memory contents of the algorithm are significantly faster than updates that also modify the memory of the algorithm, i.e., writes. 
Thus, while many streaming problems are well-understood in terms of their space and update time, little is known about their \textit{write} complexity: namely, the number of state changes made over the course of the stream. 

In this paper, we propose the number of state changes of a streaming algorithm as a complexity-theoretic parameter of interest, and make the case for its importance as a central object of study, in addition to the space and update-time of an algorithm. 
While there is significant practical motivation for algorithms that update their internal state infrequently (see \secref{sec:motivation} for a discussion), from a theoretical perspective it is not clear that having few state changes is even possible. 
Specifically, most known streaming algorithms write to their memory contents on \emph{every} update of the stream. 
Moreover, even if algorithms using fewer state changes existed, such algorithms would not be useful if they required significantly more space than prior algorithms that do not minimize the number of state changes. 
Therefore, we ask the following question:

    \begin{quote}
    \begin{center}
       {\it Is it possible to design streaming algorithms that make few updates to their internal state in addition to being space-efficient? }
        \end{center}
    \end{quote}

Our main contribution is to answer this question positively. 
Specifically, we demonstrate that algorithms exist that are simultaneously near-optimal in their space and state-change complexity. 
This demonstrates further that we do not need to pay extra in the space complexity to achieve algorithms with few state changes. 


\subsection{Motivation for Minimizing State Changes}
\seclab{sec:motivation}

\paragraph{Asymmetric read/write costs of non-volatile memory.}
The primary motivation of minimizing state changes arises from the simple observation that different actions performed over the same allocated memory may have different costs. 
For example, non-volatile memory (NVM) is low latency memory that can be retained when power to the system is turned off and therefore can dramatically increase system performance. 
Although NVM offers many benefits over dynamic random access memory (DRAM), writing to NVM is significantly more costly than reading from NVM, incurring higher energy costs, higher latency, and suffering lower per-chip bandwidth~\cite{BlellochFGGS15}. 
In fact, \cite{MeenaSCT14} noted that an NVM cell generally wears out between $10^8$ and $10^{12}$ writes. 
Hence, in contrast to DRAM, NVM has a significant asymmetry between read and write operation costs~\cite{DongWSXLC08,DongJX09,AkelCMGS11,2011Qureshi,XuDJX11,AthanassoulisBCR12,KimSDC14}, which has been the focus of several works in system design~\cite{YangLKCLY07,ChoL09,LeeIMB09,ZhouZYZ09} and database data structure design~\cite{ChenGN11,Viglas12,Viglas14}. 

One specific type of non-volatile memory is NAND flash memory, which is an electronic memory storage unit that can be electrically erased, written, and read in blocks that are generally significantly smaller than the entire device. 
NAND flash memory can be found in a number of common devices, such as smartphones, USB flash drives, memory cards, or solid-state drives. 
However, \cite{Ben-AroyaT11} notes that, at the time, individual NAND flash memory cells would wear out after $10^4$ to $10^6$ write/erase operations. 
Indeed, Apple notes that ``all iOS devices and some macOS devices use a solid-state drive (SSD) for permanent storage'' and recommends minimizing disk writes to optimize device performance~\cite{AppleAPI23}. 
Although a line of work considered wear leveling to limit memory wear~\cite{Chang07,ChangHK07,IraniNR92}, they did not immediately produce high-probability wear guarantees, thus motivating work that focused on hashing algorithms that choose which memory cell to write/overwrite each item, depending on the previous number of writes already incurred by that cell~\cite{EppsteinGMP14}.
Subsequently, the development of specific system software, e.g., the garbage collector, virtual memory manager, or virtual machine hypervisor, automatically handled balancing write operations across memory cells over long time horizons, so that an individual cell would not fail much faster than the overall unit. 
Therefore, subsequent works focused on minimizing the overall number of write operations~\cite{BlellochFGGS15} to the device rather than minimizing the number of write operations to a specific cell. 

\paragraph{Asymmetric read/write costs of large data storage systems.}
Power consumption is a major consideration for large enterprise data storage subsystems, which can often impact the density of servers and the total cost of ownership~\cite{ZhuCTZKW05,ChenYWFSH12}. 
In \cite{BakerHKSO91}, it was observed that given steady hit ratios of read operations, write operations will eventually dominate file system performance and thus be the main source of power consumption. 
Indeed, \cite{NarayananDR08,ThereskaDN11} noted that there are substantial periods of time where all the traffic in the request stream to the Microsoft data storage centers servicing applications, such as Hotmail and Messenger, is write traffic. 

In addition, for distributed data systems that each serve a number of clients, even when one server is updated, they must periodically notify the other servers about their changes to maintain some level of synchronization across the entire system. 
Therefore, write operations to a single server can still induce expensive overheads from communication costs between servers, and thus, reducing the number of writes has long been a goal in disk arrays, distributed systems, and cache-coherent multiprocessors.

\paragraph{Challenges with deterministic algorithms.}
From a theoretical perspective, minimizing the number of internal state changes immediately rules out a large class of deterministic streaming algorithms. 
For example, counting the stream length can be performed deterministically by simply repeatedly updating a counter over the course of a stream. 
Similarly, $L_1$-heavy hitters can be tracked deterministically and in sublinear space using the well-known Misra-Gries data structure~\cite{MisraG82}. 
Another example is the common merge-and-reduce technique for clustering in the streaming model, which can be implemented deterministically if the coreset construction in each reduce step is deterministic. 
Other problems such as maintaining Frequent Directions~\cite{GhashamiLPW16} or $L_2$ regression in the row arrival model~\cite{BravermanDMMUWZ20,BravermanFLRS23} also admit highly non-trivial deterministic algorithms that use sublinear space. 
However, these approaches all update the algorithm upon each stream update and thus seem inherently at odds with achieving a sublinear number of internal state changes over the course of the stream. 

\paragraph{Relationship with sampling.}
On the other hand, sampling-based algorithms inherently seem useful for minimizing the number of internal state changes. 
There are a number of problems that admit sublinear-size coresets based on importance sampling, such as clustering~\cite{BravermanFLR19,BravermanFLSZ21,Cohen-AddadWZ23}, graph sparsification~\cite{AhnG09,BravermanHMSSZ21}, linear regression~\cite{CohenMP20}, $L_p$ regression~\cite{WoodruffY23}, and low-rank approximation~\cite{BravermanDMMUWZ20}. 
These algorithms generally assign some value quantifying the ``importance'' of each stream update as it arrives and then sample the update with probability proportional to the importance. 
Thus if there are few additional operations, then the number of internal state changes can be as small as the overall data structure maintained by the streaming algorithm. 
On the other hand, it is not known that space-optimal sampling algorithms exist for a number of other problems that admit sublinear-space streaming algorithms, such as $F_p$ estimation, $L_p$-heavy hitters, distinct elements, and entropy estimation. 
Hence, a natural question is to ask whether there exist space-optimal streaming algorithms for all of these problems that also incur a small number of internal state changes. 

\subsection{Our Contributions}
In this work, we initiate the study of streaming algorithms that minimize state changes, and demonstrate the existence of algorithms which achieve optimal or near-optimal space bounds while simultaneously achieving an optimal or near optimal number of internal state changes. 

\paragraph{Heavy-hitters.}
We first consider the $L_p$-heavy hitters problem, where the goal is to output estimates $\widehat{f_j}$ to the frequency $f_j$ of every item $j\in[n]$ such that $\left\lvert\widehat{f_j}-f_j\right\rvert\le\frac{\eps}{2}\cdot\|f\|_p$, given an input accuracy parameter $\eps\in(0,1)$. 
Accurate estimation of the heavy-hitter frequencies is important for many other streaming problems, such as moment estimation, $L_p$ sampling~\cite{AndoniKO11,JayaramW18a,JayaramWZ22}, cascaded norms~\cite{JayramW09,JiangLLRW20}, and others. 
Note that under such a guarantee, along with a $2$-approximation of $\|f\|_p$, we can automatically output a list that contains all $j\in[n]$ such that $f_j\ge\eps\cdot\|f\|_p$ but also no index $j\in[n]$ such that $f_j<\frac{\eps}{4}\cdot\|f\|_p$. 
We defer discussion of how to obtain a $2$-approximation to $\|f\|_p$ for the moment and instead focus on the additive error guarantee for all $\widehat{f_j}$. 
Our main result for the heavy hitters problem is the following:
\begin{restatable}{theorem}{thmhhmain}
\thmlab{thm:hh:main}
Given a constant $p\ge 1$, there exists a one-pass insertion-only streaming algorithm that has $\cO{n^{1-1/p}}\cdot\poly\left(\log(nm),\frac{1}{\eps}\right)$ internal state changes, and solves the $L_p$-heavy hitter problem, i.e., it outputs a frequency vector $\widehat{f}$ such that
\[\PPr{\|\widehat{f}-f\|_\infty\le\frac{\eps}{2}\cdot\|f\|_p}\ge\frac{2}{3}.\]
For $p\in[1,2]$, the algorithm uses $\cO{\frac{1}{\eps^{4+4p}}}\cdot\polylog(mn)$ bits of space, while for $p>2$, the algorithm uses $\tO{\frac{1}{\eps^{4+4p}} n^{1-2/p}}$ bits of space. 
\end{restatable}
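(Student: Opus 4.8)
The plan is to replace the standard per-update $L_p$ heavy-hitter sketch — $\textsc{CountSketch}$ for $p\in[1,2]$, and its $\tO{n^{1-2/p}}$-space variant for $p>2$, both of which already attain the additive $\frac{\eps}{2}\|f\|_p$ guarantee within the claimed space but with $\Theta(m)$ state changes — by a \SampleAndHold-style scheme that only ``touches'' a subsample of the stream. First I would maintain, as a black box, a constant-factor approximation $\widetilde L$ to $\|f\|_p$ using a companion subroutine with the same state-change and space budget (for instance the paper's $F_p$-moment estimator, or a coarse version of it); only its constant-factor accuracy is used, and it serves to calibrate the sampling rate. Set the occurrence-subsampling rate to $q\asymp\poly\left(\log(mn),\frac1\eps\right)/(\eps\widetilde L)$, let $\sigma'$ be the substream obtained by keeping each update independently with probability $q$, and run an $L_p$-heavy-hitter summary (described below) only on $\sigma'$. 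Since $\|f\|_1\le n^{1-1/p}\|f\|_p$ by H\"older's inequality, the expected length of $\sigma'$ is $qm=\cO{n^{1-1/p}}\cdot\poly\left(\log(mn),\frac1\eps\right)$; and any $j$ with $f_j\ge\frac{\eps}{4}\widetilde L$ has $f'_j=\Theta(qf_j)$ with high probability, so it is seen $\poly\left(\log(mn),\frac1\eps\right)$ times in $\sigma'$ and can be recovered and counted there. The output is $\widehat f_j:=\widehat{f'_j}/q$ for the items the summary reports, and $0$ otherwise.

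The state-change bound is then essentially immediate: a state change occurs on update $t$ only if $u_t$ is sampled (probability $q$) and that sampled update actually modifies the summary, so the total is $\cO{qm}$ times the number of cells a single summary update touches; using Morris-type approximate counters for every count the summary stores, each stored item incurs only $\cO{\log m}$ increments over the whole stream, and maintaining $\widetilde L$ — hence halving $q$ and re-sampling the maintained structures whenever $\widetilde L$ doubles, which occurs $\cO{\log(mn)}$ times — stays within the same budget. The accuracy of $\widehat f_j$ decomposes into the summary's estimation error for $f'_j$, rescaled by $1/q$, and the binomial fluctuation $|f'_j-qf_j|/q$; both are $\le\frac{\eps}{4}\|f\|_p$ once the $\poly\left(\log,\frac1\eps\right)$ factor hidden in $q$ is large enough, by Bernstein's inequality together with a union bound over the $\cO{1/\eps^p}$ candidate heavy hitters (distinguishing the regimes $f_j\lesssim\poly(\log,\frac1\eps)\cdot\eps\|f\|_p$, where the additive sampling error is small, and $f_j$ larger, where the relative sampling error is $\le\eps$). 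The approximate counters contribute only a further $(1\pm\eps)$ multiplicative error, which on a heavy hitter is at most $\eps f_j\le\eps\|f\|_p$ since $f_j\le\|f\|_p$, absorbed by adjusting constants, and the prefix missed before an item is first held adds only $\cO{1/q}=\cO{\eps\|f\|_p/\poly(\log,\frac1\eps)}$.

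The genuinely delicate point — and the reason the summary cannot simply be $\textsc{CountSketch}$ (or Misra--Gries, or $\textsc{SpaceSaving}$) run on $\sigma'$ — is that subsampling the stream does \emph{not} preserve the $L_p$ geometry of the head relative to the tail: the $\Theta(\|f_{\text{small}}\|_1)$ units of $L_1$ mass carried by the coordinates below the sampling threshold turn, after subsampling, into roughly $n^{1-1/p}$ distinct ``dust'' coordinates each of count $\cO{\poly\log}$, so that an $\eps$-heavy hitter of $f$ need not be an $\eps'$-heavy hitter of $f'$ for any $\eps'$ that a $\poly\left(\log,\frac1\eps\right)$-space sketch can detect, and a naive sketch on $\sigma'$ would need width $\Omega(n^{1-1/p})$, which already exceeds the allowed space for every finite $p$. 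The plan is to dilute this dust with a second, hierarchical \emph{universe} subsampling in the spirit of Indyk--Woodruff: for levels $\ell=0,1,\dots,\cO{\log n}$, keep each coordinate with probability $2^{-\ell}$ and run a small $L_p$-point-query structure (a narrow $\textsc{CountSketch}$ for $p\le2$, and its $\tO{n^{1-2/p}/\eps^{\cO{1}}}$-space analogue for $p>2$) on the $2^{-\ell}$-fraction of $\sigma'$ surviving at level $\ell$; since the level-$\ell$ structure only sees a $2^{-\ell}$ fraction of the dust, each heavy hitter should be recovered at the scale where the residual tail has shrunk below its mass, and the per-update work summed over levels telescopes to $\cO{qm}\cdot\poly\left(\log(mn),\frac1\eps\right)$. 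The main work is to show that this two-level scheme succeeds with constant probability while keeping the per-level structures of the stated size — i.e., bounding, via H\"older- and moment-type inequalities, the number of dust coordinates and their contribution to the level-$\ell$ $\ell_2$- (resp. $\ell_p$-) tail, and then balancing the level at which each heavy hitter is isolated against its survival probability at that level — and to make the whole construction robust to the adaptively shrinking $q$. Getting all of these trade-offs to close simultaneously, and tracking the resulting $\eps$-exponent (and the $n^{1-2/p}$ factor for $p>2$), is the crux of the proof.
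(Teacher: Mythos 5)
Your approach is genuinely different from the paper's, and it contains a gap that does not close. You correctly identify the dust problem: after occurrence-subsampling at rate $q\asymp\poly\left(\log(mn),\frac1\eps\right)/(\eps\|f\|_p)$, the tail of $f$ spreads over roughly $n^{1-1/p}\cdot\poly$ distinct low-count coordinates in $\sigma'$, so an $\eps$-heavy hitter of $f$ is not $\eps'$-heavy in $\sigma'$ for any $\eps'$ a $\polylog$-width sketch can resolve. But the proposed cure --- universe-subsampling at rates $2^{-\ell}$ with a narrow CountSketch per level --- cannot close the gap, because the two effects cancel exactly. For a borderline heavy hitter $j$ with $f'_j\approx\poly$, the level-$\ell$ dust has $L_2$ mass about $\sqrt{n^{1-1/p}/2^\ell}\cdot\poly$, so a width-$w$ CountSketch can resolve $j$ only once $2^\ell\gtrsim n^{1-1/p}/(w\cdot\poly)$; yet $j$ survives the universe subsample at level $\ell$ only with probability $2^{-\ell}\lesssim w\cdot\poly/n^{1-1/p}$, which vanishes the moment $w$ is within the stated space budget. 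Running more copies or widening the sketch to compensate lands you back at $\Omega(n^{1-1/p})$ space. This trade-off is tolerable in Indyk--Woodruff \emph{moment} estimation, where one needs only aggregate level-set contributions and these concentrate over the $\approx 2^i$ items of each level set; it is not tolerable here, where $\|\widehat f-f\|_\infty\le\frac{\eps}{2}\|f\|_p$ demands recovering each specific heavy hitter with constant probability. A secondary issue is circularity in the calibration: you propose to obtain $\widetilde L\approx\|f\|_p$ from ``the paper's $F_p$-moment estimator,'' but that estimator is built on top of this very heavy-hitter primitive.

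The paper never runs a frequency sketch on the subsampled stream at all, which is how it avoids both obstructions. After occurrence-subsampling (at a rate tied only to $m$ and $n$, not to $\|f\|_p$), it keeps a small reservoir $Q$ of sampled items and starts a Morris counter for $j$ only when a fresh arrival of $j$ collides with an entry already in $Q$; counters therefore accumulate mainly on repeated items rather than dust. The number of live counters is controlled by pruning the lowest-valued ones \emph{within each age bucket} $[t-2^{z+1},t-2^z)$ whenever they become too numerous (\lemref{random:counters} and \lemref{num:new:counters}); this age-stratified pruning is the new ingredient that lets a true heavy hitter survive locally-bursty pseudo-heavy competitors, the very scenario that defeats earlier sample-and-hold schemes. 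The unknown $\|f\|_p$ is handled by running the reservoir on geometrically occurrence-subsampled copies and exploiting the one-sidedness of sample-and-hold (it can only undercount), so the correct level can be selected without any prior $F_p$ estimate --- sidestepping the circularity you would otherwise face.
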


We next give a lower bound showing that any algorithm solving the $L_p$-heavy hitters problem requires $\Omega(n^{1-1/p})$ state updates.

\begin{restatable}{theorem}{thmhhlb}
\thmlab{thm:hh:lb}
Let $\eps\in(0,1)$ be a constant and $p\ge 1$. 
Any algorithm that solves the $L_p$-heavy hitters problem with threshold $\eps$ with probability at least $\frac{2}{3}$ requires at least $\frac{1}{2\eps}n^{1-1/p}$ state updates.
\end{restatable}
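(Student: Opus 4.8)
The plan is to produce a single (very long) hard stream on which any algorithm meeting the guarantee is forced to rewrite its memory at least $\tfrac1{2\eps}n^{1-1/p}$ times. I build the stream in $R=\Theta(n^{1-1/p}/\eps)$ consecutive rounds, where round $r$ inserts $t_r$ copies of a \emph{fresh} universe element $i_r$ -- one that was not used in any earlier round -- and $t_r$ is the least positive integer with $t_r>\eps\,\|f^{(r)}\|_p$, writing $f^{(r)}$ for the frequency vector after round $r$. Since $\|f^{(r)}\|_p^p=\sum_{s\le r}t_s^p$, this requirement unrolls into the recursion $t_r^p(1-\eps^p)>\eps^p\sum_{s<r}t_s^p$, which always has a solution: the partial norms $\|f^{(r)}\|_p^p$ merely have to grow by the fixed factor $(1-\eps^p)^{-1}$ per round. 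Thus the instance is well defined, it consumes $R<n$ distinct elements (which holds for a constant $\eps$ once $n$ is large enough), and the only price is a stream length $m$ that is exponential in $R$.

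Next I isolate the combinatorial heart. For $r\ge 1$ let $E_r$ be the event that the algorithm, run on the prefix $\sigma^{(r)}$ consisting of rounds $1,\dots,r$, outputs $\widehat f$ with $\|\widehat f-f^{(r)}\|_\infty\le\tfrac\eps2\|f^{(r)}\|_p$. The claim is: if $E_{r-1}$ and $E_r$ both occur, then the algorithm changes its internal state at least once during round $r$. Round $r$ contains only copies of $i_r$, so if the memory contents never changed during round $r$ the output after round $r$ is identical to the output after round $r-1$. But $E_{r-1}$ forces that common output to have $\widehat f_{i_r}\le\tfrac\eps2\|f^{(r-1)}\|_p$ (because $i_r$ is fresh, so $f^{(r-1)}_{i_r}=0$), whereas $E_r$ forces $\widehat f_{i_r}\ge t_r-\tfrac\eps2\|f^{(r)}\|_p>\tfrac\eps2\|f^{(r)}\|_p\ge\tfrac\eps2\|f^{(r-1)}\|_p$, using $t_r>\eps\|f^{(r)}\|_p$ and monotonicity of the partial norms -- a contradiction.

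The remaining step is bookkeeping. Each $\sigma^{(r)}$ is a legitimate input, so the hypothesis gives $\PPr{E_r}\ge\tfrac23$, hence $\PPr{E_{r-1}\cap E_r}\ge\tfrac13$ by a union bound over the two complementary events. By the claim the number of state changes in round $r$ is at least $\mathbf 1[E_{r-1}\cap E_r]$, so linearity of expectation gives that the expected total number of state changes over the whole stream is at least $(R-1)/3$, which with the chosen $R$ is at least $\tfrac1{2\eps}n^{1-1/p}$; in particular some execution uses that many, so the algorithm cannot guarantee fewer. It is worth stressing that only \emph{consecutive} pairs of events are ever needed at once, so the success probability never has to be union-bounded over all $R$ rounds.

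The point I expect to be the main obstacle -- more conceptual than computational -- is recognizing that correctness of the algorithm \emph{at the end} of the stream is worthless here: $\|f^{(R)}\|_p$ dwarfs every $t_r$ with $r<R$, so at the end each earlier planted element looks negligible and may legitimately be reported as $0$. The lower bound has to be squeezed out of the algorithm being correct on every prefix simultaneously, and the geometric blow-up of the $L_p$ norm -- which is precisely what lets ``the fresh element $i_r$ is a heavy hitter right now'' remain sustainable for $\Theta(n^{1-1/p}/\eps)$ rounds -- is simultaneously the idea that makes the construction go through and the reason the witnessing stream must be so long. Checking that the recursion for the $t_r$ never outruns the universe size $n$, and pinning down the constants so the count is exactly $\tfrac1{2\eps}n^{1-1/p}$, are the only routine loose ends.
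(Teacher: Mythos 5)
Your proof is logically sound, but it takes a genuinely different route from the paper and, more importantly, proves a materially weaker statement than the paper's proof does.

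The paper's argument is a distinguishing argument on a pair of streams of length exactly $n$: one stream plants a random contiguous block of $\eps n^{1/p}$ copies of a random item inside an otherwise all-distinct stream, the other is a random permutation of $[n]$. If the algorithm makes fewer than $\tfrac{1}{2\eps}n^{1-1/p}$ state changes, then with constant probability its memory contents are unchanged across the block, and so it cannot distinguish the two streams, yet on one of them the block contains a genuine $L_p$-heavy hitter. Your argument is instead a telescoping prefix argument on a single long stream: each round plants a fresh element so heavily that it becomes the current heavy hitter, and correctness on two consecutive prefixes forces a visible change in the reported value $\widehat f_{i_r}$, hence a state change in round $r$. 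The union bound to $\Pr[E_{r-1}\cap E_r]\ge\tfrac13$ and the linearity-of-expectation step are both correct, and the core claim (no state change in round $r$ contradicts $E_{r-1}\cap E_r$) is airtight given that the output is a function of the state.

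The real issue is the one you wave away as ``the only price.'' To keep the fresh item $i_r$ heavy at time $r$, you need $t_r>\eps\|f^{(r)}\|_p$, which forces $\|f^{(r)}\|_p^p$ to grow by a fixed multiplicative factor $(1-\eps^p)^{-1}$ every round; after $R=\Theta(n^{1-1/p}/\eps)$ rounds the stream length $m$ is $2^{\Omega(n^{1-1/p})}$, which is super-polynomial for every $p>1$. The paper's construction uses $m=n$. This matters for two reasons. First, it is the $m=\Theta(n)$ regime in which the lower bound is supposed to be matched by the algorithm of Theorem~\ref{thm:hh:main}, whose state-change bound carries $\polylog(m)$ factors: with your $m$, those factors become $\poly(n^{1-1/p})$, and the bounds no longer line up. Second, in the paper's model a word is $\cO{\log n+\log m}$ bits, so with your $m$ a single ``state change'' is a write of $\poly(n)$ bits, which dilutes the meaning of counting state changes at all. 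Your proof does establish the theorem as literally written (the statement does not bound $m$), but it misses what makes the result interesting: that $\Omega(n^{1-1/p})$ state changes are already unavoidable on a stream as short as $n$. A version of your argument that works with $m=\cO{n}$ would require keeping the planted heavy hitters light relative to the tail, and that is precisely where the paper's two-stream distinguishing construction earns its keep.
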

Together, \thmref{thm:hh:main} and \thmref{thm:hh:lb} show that we achieve a near-optimal number of internal state changes. 
Furthermore, \cite{BaIPW10,JowhariST11} showed that for any $p>0$, the $L_p$-heavy hitters problem requires $\Omega\left(\frac{1}{\eps^p}\log n\right)$ words of space, while \cite{Bar-YossefJKS04,Gronemeier09,Jayram09} showed that for $p>2$ and even for constant $\eps>0$, the $L_p$-heavy hitters problem requires $\Omega\left(n^{1-2/p}\right)$ words of space. 
Therefore, \thmref{thm:hh:main} is near-optimal for all $p\ge 1$, for both the number of internal state updates and the memory usage. 

\paragraph{Moment estimation.}
We then consider the $F_p$ moment estimation problem, where the goal is to output an estimate of $F_p(f)=(f_1)^p+\ldots+(f_n)^p$ for a frequency vector $f\in\mathbb{R}^n$ implicitly defined through an insertion-only stream. 
Our main result is the following:
\begin{restatable}{theorem}{thmfpmain}
\thmlab{thm:fp:main}
Given a constant $p\ge 1$, there exists a one-pass insertion-only streaming algorithm that has $\tO{n^{1-1/p}}$ internal state changes, and outputs $\widehat{F_p}$ such that
\[\PPr{\left\lvert\widehat{F_p}-F_p\right\rvert\le\eps\cdot F_p}\ge\frac{2}{3}.\]
For $p\in[1,2]$, the algorithm uses $\cO{\frac{1}{\eps^{4+4p}}}\cdot\polylog(mn)$ bits of space, while for $p>2$, the algorithm uses $\tO{\frac{1}{\eps^{4+4p}} n^{1-2/p}}$ space.
\end{restatable}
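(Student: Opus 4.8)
The plan is to reduce $F_p$ moment estimation to the $L_p$-heavy hitters primitive of \thmref{thm:hh:main} through the geometric subsampling framework of Indyk--Woodruff, the point being that the per-level costs form a convergent geometric series, which is exactly what keeps the total number of state changes at $\tO{n^{1-1/p}}$. Concretely, I would draw a hash function $h\colon[n]\to\{0,1,\dots,L\}$ with $L=\cO{\log n}$ and $\PPr{h(i)\ge j}=2^{-j}$, from a family with enough independence for the concentration step below, and call coordinate $i$ \emph{alive at level $j$} when $h(i)\ge j$; this gives a nested chain $[n]=S_0\supseteq S_1\supseteq\cdots\supseteq S_L$ with $\Ex{|S_j|}=n2^{-j}$. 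On the substream of updates $u_t$ that are alive at level $j$ I would run the algorithm of \thmref{thm:hh:main} with accuracy and threshold parameters polynomially related to $\eps$ and $1/\log n$. Since only about $n2^{-j}$ distinct coordinates can ever appear in this substream --- if necessary after first hashing the alive coordinates into a universe of size $\tO{n2^{-j}}$ and running $\cO{1}$ independent copies to defeat collisions --- this instance contributes only $\tO{(n2^{-j})^{1-1/p}}$ state changes and, by \thmref{thm:hh:main}, returns an estimate $\widehat{f_i^{(j)}}$ of $f_i$ for every alive $i$ with additive error a small fraction of the $p$-norm of $f$ restricted to $S_j$; in particular it identifies every coordinate heavy within $S_j$.

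Given the per-level outputs, I would combine them with the Indyk--Woodruff estimator. Partition the coordinates into $\cO{\log m}$ weight classes according to the dyadic value of $f_i$; assign each class to the (data-dependent) level at which its expected number of surviving coordinates first lands in a target window --- large enough, on the order of the inverse heavy-hitters threshold, that the class's $F_p$ contribution concentrates, yet small enough that all those survivors appear in that level's heavy-hitters list. Writing $H_j$ for the set of coordinates assigned to level $j$, the estimate is
\[
\widehat{F_p}=\sum_{j=0}^{L} 2^{j}\sum_{i\in H_j}\bigl(\widehat{f_i^{(j)}}\bigr)^{p},
\]
where the factor $2^{j}$ undoes the subsampling; coordinates whose class is never well resolved contribute a negligible share of $F_p$.

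For correctness, I would first boost each of the $L+1$ heavy-hitters instances to failure probability $o(1/L)$ (median of $\cO{\log\log n}$ copies, a $\tO{1}$ overhead) so that, by a union bound, all reported lists and estimates are accurate; since $p$ is constant, relative error $\cO{\eps}$ in the $\widehat{f_i^{(j)}}$ transfers to relative error $\cO{\eps}$ in their $p$-th powers. The substantive step is the second-moment analysis: decomposing $F_p$ over weight classes and using that $|S_j\cap(\text{class})|$ concentrates around its mean at the assigned level, one gets $\Ex{\widehat{F_p}}=(1\pm\cO{\eps})F_p$ and $\Var{\widehat{F_p}}=\cO{\eps^2}F_p^2$, so Chebyshev gives a constant success probability, amplified to $2/3$ by $\cO{1}$ repetitions. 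Since $h$ is fixed at initialization, every state change happens inside a heavy-hitters instance, and the total is $\sum_{j\ge 0}\tO{(n2^{-j})^{1-1/p}}=\tO{n^{1-1/p}}$ because $\sum_{j\ge 0}2^{-j(1-1/p)}=\cO{1}$ for $p>1$ (the case $p=1$, where $F_1=m$, is handled directly by an approximate counter using $\cO{\log m}=\tO{1}$ state changes). The space is the sum over levels of the \thmref{thm:hh:main} bound: for $p\in[1,2]$ this is $L\cdot\cO{\eps^{-4-4p}}\polylog(mn)=\cO{\eps^{-4-4p}}\polylog(mn)$, and for $p>2$ it is $\sum_{j}\tO{\eps^{-4-4p}(n2^{-j})^{1-2/p}}=\tO{\eps^{-4-4p}n^{1-2/p}}$.

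The hard part will be the variance analysis of the subsampling estimator: one has to fix the weight-class-to-level assignment so that every class is counted once, at a level where its survivor count both concentrates --- which forces a heavy-hitters threshold as small as roughly $\eps^{2}$, up to polylogarithmic factors --- and still fits inside the heavy-hitters list; and one has to confirm that the heavy-hitters subroutine's state-change count genuinely scales with the subsampled universe $n2^{-j}$ rather than with $n$, since otherwise the geometric series fails to telescope and the bound degrades by a $\log n$ (or worse) factor. A secondary point is calibrating the limited independence of $h$, together with the collision-resolution repetitions, so that the concentration, the union bound, and the space budget all hold simultaneously.
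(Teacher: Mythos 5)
Your proposal follows essentially the same approach as the paper: geometric subsampling of the universe \`a la Indyk--Woodruff, running the heavy-hitters subroutine of \thmref{thm:hh:main} on each subsampled level, rescaling the reported $p$-th powers by the inverse sampling rate, and combining via Chebyshev plus median amplification, with the per-level state-change bounds summing geometrically to $\tO{n^{1-1/p}}$. The one device the paper uses that you should add when fleshing this out is a randomized shift $\lambda\in[\tfrac12,1]$ of the dyadic weight-class boundaries, which directly handles the ``class counted once at the right level'' issue you flag as the hard part; otherwise the structure and the accounting match the paper's argument.
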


We next give a lower bound showing that any approximation algorithm achieving $(2-\Omega(1))$-approximation to $F_p$ requires $\Omega(n^{1-1/p})$ state updates.

\begin{restatable}{theorem}{thmfpconslb}
\thmlab{thm:fp:cons:lb}
Let $\eps\in(0,1)$ be a constant and $p\ge 1$. 
Any algorithm that achieves a $2-\eps$ approximation to $F_p$ with probability at least $\frac{2}{3}$ requires at least $\frac{1}{2}n^{1-1/p}$ state updates.
\end{restatable}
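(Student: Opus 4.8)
The plan is to use the fact that a one-pass algorithm does not know in advance when the stream terminates, so it must report a valid estimate after \emph{every} prefix of the input, while its reported value is frozen during any block of updates over which it does not modify its memory. I would build a single stream whose prefixes realize $F_p$-values across a huge multiplicative range, and then argue that no short sequence of such ``frozen'' blocks can cover that range.

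Concretely, fix the stream $\sigma$ consisting of $m$ copies of coordinate $1$, where $m=m(n,p,\eps)$ is chosen large at the end; then the length-$t$ prefix $\sigma_{\le t}$ satisfies $F_p(\sigma_{\le t})=t^p$. Run a supposedly correct algorithm $\calA$ on $\sigma$ with its randomness $r$ frozen. Since $\sigma_{\le t}$ is literally a prefix of $\sigma$, the memory contents of $\calA$ after $t$ updates of $\sigma$ coincide with its memory after processing $\sigma_{\le t}$; hence, for each fixed $t$, feeding the time-$t$ memory to $\calA$'s output rule produces a $(2-\eps)$-approximation of $t^p$ with probability at least $\tfrac23$ over $r$.

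The crux is a ``frozen-interval'' bound: on a maximal run of time steps during which $\calA$'s memory is unchanged, its reported value is a single number $c$, and a fixed $c$ can be a $(2-\eps)$-approximation of $t^p$ only when $t^p$ lies in a window of multiplicative width at most $2-\eps$, i.e.\ only for $t$ in a window of multiplicative width at most $(2-\eps)^{1/p}$ (squared, under a two-sided convention). Choose milestone prefix lengths $t_1<\dots<t_M\le m$ in geometric progression with ratio just above this width, so that $M=\Theta\!\big(p\log_{2-\eps} m\big)$ and no frozen interval contains two milestones at which $\calA$ is correct. Letting $S$ be the random number of state changes of $\calA$ on $\sigma$, the timeline splits into at most $S+1$ frozen intervals, so the number of correct milestones is at most $S+1$; taking expectations and using correctness at each individual milestone with probability $\ge\tfrac23$ gives $\Ex{S}+1\ge\tfrac23 M$. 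Hence some fixing of $r$ forces at least $\tfrac23 M-1$ state changes, and choosing $m$ large enough that $\tfrac23 M-1\ge\tfrac12 n^{1-1/p}$ — which, since $M=\Theta(p\log_{2-\eps} m)$, needs only $m=(2-\eps)^{\Theta(n^{1-1/p}/p)}$ — yields the claim.

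The step I expect to require the most care is the constant-chasing that makes the final bound exactly $\tfrac12 n^{1-1/p}$: one must fix the geometric ratio of the milestones, the precise approximation convention, and the precise value of $m$ so that the losses from the $\tfrac23$ success probability and the ``$+1$'' are absorbed. I would also flag explicitly that this approach inherently needs $m$ exponential in $n^{1-1/p}$; this is unavoidable for an $F_p$ lower bound of this flavour, since $F_p$ is nondecreasing along an insertion stream and never exceeds $m^p$, so only $O(p\log m)$ multiplicative doublings are ever available — the quantity $\tfrac12 n^{1-1/p}$ is exactly what one obtains by taking $m$ at the top of the range the model permits.
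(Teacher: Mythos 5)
Your argument is internally consistent under its own assumptions, but it proves a genuinely different (and weaker) statement than the one the paper establishes, and the gap is not just constant-chasing.

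The paper's proof is an \emph{indistinguishability} argument on streams of length exactly $n$: it constructs two random streams $\calS_1$ and $\calS_2$ over universe $[n]$, both of length $n$, differing only on a random contiguous block $B$ of $n^{1/p}$ positions (in $\calS_1$ the block is $n^{1/p}$ copies of one random item, in $\calS_2$ it is $n^{1/p}$ fresh distinct items), so that $F_p(\calS_1) \approx 2\,F_p(\calS_2)$. It then averages over the random location of $B$: if fewer than $\tfrac12 n^{1-1/p}$ state changes occur, then with probability $\ge \tfrac12$ no state change falls inside $B$, and the algorithm's state, and hence its output, is identical on $\calS_1$ and $\calS_2$. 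This yields $\Omega(n^{1-1/p})$ for $m=n$, the regime the matching upper bound actually operates in, and it holds even if the algorithm knows $m$ and is only required to answer once, at the end of the stream.

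Your construction never uses the structure of $[n]$ at all (everything lives on a single coordinate), so the quantity $n^{1-1/p}$ cannot emerge from the combinatorics; it is inserted by hand through the choice $m=(2-\eps)^{\Theta(n^{1-1/p}/p)}$. Consequently, for streams of any polynomial length $m=\poly(n)$ — which is where the paper's upper bound of $\tO{n^{1-1/p}}$ state changes lives — your argument yields only $\Theta(p\log m)=\Theta(\log n)$ state changes, exponentially weaker than $n^{1-1/p}$. Your closing remark that needing exponentially long streams ``is unavoidable for an $F_p$ lower bound of this flavour'' is correct about the doubling/anytime flavour but misleading about the theorem: the paper's flavour avoids it entirely. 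A second, smaller issue is that your argument requires the algorithm to produce a correct answer on every prefix (i.e., not to know $m$), whereas the paper's two-stream argument makes no such assumption — both $\calS_1$ and $\calS_2$ have the same, known, length. The missing idea, then, is precisely the two-distribution indistinguishability construction over a random block; without it, no choice of milestones on a single stream can produce a polynomial-in-$n$ lower bound for polynomial-length streams.
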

\thmref{thm:fp:cons:lb} shows that our algorithm in \thmref{thm:fp:main} achieves a near-optimal number of internal state changes. 
Moreover, it is known that any one-pass insertion-only streaming algorithm that achieves $(1+\eps)$-approximation to the $F_p$ moment estimation problem requires $\Omega\left(\frac{1}{\eps^2}+\log n\right)$ bits of space~\cite{AlonMS99,Woodruff04} for $p\in[1,2]$ and $\Omega\left(\frac{1}{\eps^2} n^{1-2/p}\right)$ bits of space~\cite{WoodruffZ21} for $p>2$, and thus \thmref{thm:fp:main} is also near-optimal in terms of space for all $p\ge 1$.  

\subsection{Technical Overview}

\paragraph{Heavy-hitters.}
We first describe our algorithm for $L_p$-heavy hitters using near-optimal space and a near-optimal number of internal state changes. 
For ease of discussion, let us assume that $F_p=\widetilde{\Theta}_\eps(n)$, so that the goal becomes to identify the coordinates $j\in[n]$ with $f_j\ge\eps\cdot n^{1/p}$, given an input accuracy parameter $\eps\in(0,1)$. 

We first define a subroutine $\SampleAndHold$ based on sampling a number of items into a reservoir $Q$. 
As we observe updates in the stream, we sometimes update the contents of $Q$ and sometimes observe that some stream updates are to coordinates that are being held by the reservoir. 
For the items that have a large number of stream updates while they are being held by the reservoir, we create separate counters for these items. 

We first describe the intuition for $p>2$. 
We create a reservoir $Q$ of size $\kappa=\widetilde{\calO}_\eps(n^{1-2/p})$ and sample each item of the stream into the reservoir $Q$ with probability roughly $\widetilde{\Theta}_\eps\left(\frac{1}{n^{1/p}}\right)$. 
Note that at some point we may attempt to sample an item of the stream into the reservoir $Q$ when the latter is already full. 
In this case, we choose a uniformly random item of $Q$ to be replaced by the item corresponding to the stream update. 

Our algorithm also checks each stream update to see if it matches an item in the reservoir; if there is a match, we create an explicit counter tracking the frequency of the item. 
In other words, if $j\in[n]$ arrives as a stream update and $j\in Q$ is in the reservoir, then our algorithm $\SampleAndHold$ creates a separate counter for $j$ to count the number of subsequent instances of $j$. 

Now for a heavy hitter $j\in[n]$, we have $f_j\ge\eps\cdot n^{1/p}$ and thus since the sampling probability is $\widetilde{\Theta}_\eps\left(\frac{1}{n^{1/p}}\right)$, then we can show that $j$ will likely be sampled into our reservoir $Q$ at some point. 
In fact, since the reservoir $Q$ has size $\kappa=\widetilde{\calO}_\eps(n^{1-2/p})$, then in expectation, $j$ will be retained by the reservoir for roughly $\widetilde{\Omega}_\eps(n^{1-1/p})$ stream updates before it is possibly replaced by some other item. 
Moreover, since $f_j\ge\eps\cdot n^{1/p}$, then we should expect another instance of $j$ to arrive in $\widetilde{\Omega}_\eps(n^{1-1/p})$ additional stream updates, where the expectation is taken over the randomness of the sampled positions, under the assumption that $F_p=\Theta(n)$, which implies also that the stream length is at most $\cO{n}$. 
Therefore, our algorithm will create a counter for tracking $j$ before $j$ is removed from the reservoir $Q$. 
Furthermore, we can show that the counter for $j$ is likely created sufficiently early in the stream to provide a $(1+\eps)$-approximation to the frequency $f_j$ of $j$. 
Then to decrease the number of internal state updates, we can use Morris counters to approximate the frequency of subsequent updates for each tracked item. 

\paragraph{Counter maintenance for heavy-hitters.}
The main issue with this approach is that too many counters can be created. 
As a simple example, consider when all items of each coordinate arrive together, one item after the other.  
Although this is a simple case for counting heavy-hitters, our algorithm will create counters for almost any item that it samples, and although our reservoir uses space $\widetilde{\calO}_\eps(n^{1-2/p})$, the total number of items sampled over the course of the stream is $\widetilde{\calO}_\eps(n^{1-1/p})$ and thus the number of created counters can also be $\widetilde{\calO}_\eps(n^{1-1/p})$, which would be too much space. 
We thus create an additional mechanism to remove half of the counters each time the number of counters becomes too large, i.e., exceeds $\widetilde{\calO}_\eps(n^{1-2/p})$. 

The natural approach would be to remove half of the counters with the smallest tracked frequencies. 
However, one could imagine a setting where $m=\Theta(n)$ and the $p$-th moment of the stream is also $\Theta(n)$, but there exists a heavy-hitter with frequency $\Theta(n^{1/p})$ that appears every $\Theta(n^{1-1/p})$ updates. 
In this case, even if the heavy-hitter is sampled, its corresponding counter will be too low to overcome the other counters in the counter maintenance stage. 
Therefore, we instead remove the counters with the smallest tracked frequencies for each set of counters that have been maintained for a number of steps between $2^z$ and $2^{z+1}$ for each integer $z\ge 0$, which overcomes per-counter analysis in similar algorithms based on sampling~\cite{BravermanO13,BravermanKSV14}. 

Since our algorithm samples each item of the stream of length $\cO{n}$ with probability $\frac{1}{\widetilde{\calO}_\eps(n^{1/p})}$, then we expect our reservoir to have $\widetilde{\Omega}_\eps(n^{1-1/p})$ internal state changes. 
On the other hand, the counters can increment each time another instance of the tracked item arrives. 
To that end, we replace each counter with an approximate counter that has a small number of internal state changes. 
In particular, by using Morris counters as mentioned above, the number of internal state changes for each counter is $\poly\left(\log n,\frac{1}{\eps},\log\frac{1}{\delta}\right)$ times over the course of the stream. 
Therefore, the total number of internal state changes is $\widetilde{\Omega}_\eps(n^{1-1/p})$ while the total space used is $\widetilde{\calO}_\eps(n^{1-2/p})$. 

For $p\in[1,2]$, we instead give the reservoir $Q$ a total of $\kappa=\poly_{\eps}(\log n)$ size, so that the total space is $\poly_{\eps}(\log n)$ while the total number of internal state changes remains $\widetilde{\Omega}_\eps(n^{1-1/p})$. 

\paragraph{Removing moment assumptions.}
To remove the assumption that $F_p=\widetilde{\calO}_\eps(n)$, we note that if each element of the stream of length $m$ is sampled with probability $q<1$, then the expected number of sampled items is $qm$, but the $p$-th power of the expected number of sampled items is $(qm)^p$. 
Although this is not the $p$-th moment of the stream, we nevertheless can expect the $F_p$ moment of the stream to decrease at a faster rate than the number of sampled items. 
Thus we create $L=\cO{\log(nm)}$ substreams so that for each $\ell\in[L]$, we subsample each stream update $[m]$ with probability $\frac{1}{2^{\ell-1}}$. 
For one of these substreams $J_\ell$, we will have $F_p(J_\ell)=\widetilde{\calO}_\eps(n)$. 
We show that we can estimate the frequency of the heavy-hitters in the substream $J_\ell$ and then rescale by the inverse sampling rate to achieve a $(1+\eps)$-approximation to the frequency of the heavy-hitters in the original stream. 

It then remains to identify the correct stream $\ell$ such that $F_p(J_\ell)=\widetilde{\calO}_\eps(n)$. 
A natural approach would be to approximate the moment of each substream, to identify such a correct stream. 
However, it turns out that our $F_p$ moment estimation algorithm will ultimately use our heavy hitter algorithm as a subroutine. 
Furthermore, other $F_p$ moment estimation algorithms, e.g.,~\cite{AlonMS99,Indyk06,Li08,GangulyW18}, use a number of internal state changes that is linear in the stream length and it is unclear how to adapt these algorithms to decrease the number of internal state changes. 
Instead, we note that with high probability, the estimated frequency of each heavy-hitter by our algorithm can only be an underestimate. 
This is because if we initialize counters throughout the stream to track the heavy hitters, then our counters might miss some stream updates to the heavy hitters, but it is not possible to overcount the frequency of each heavy hitter, i.e., we cannot count stream updates that do not exist. 
Moreover, this statement is still true, up to a $(1+\eps)$ factor, when we use approximate counters. 
Therefore, it suffices to use the maximum estimation for the frequency for each heavy hitter, across all the substreams. 
We can then use standard probability boosting techniques to simultaneously accurately estimate all $L_p$-heavy hitters. 

\paragraph{Moment estimation.}
Given our algorithm for finding $(1+\eps)$-approximations to the frequencies of $L_p$-heavy hitters, we now adapt a standard subsampling framework~\cite{IndykW05} to reduce the $F_p$ approximation problem to the problem of finding the $L_p$-heavy hitters. 
The framework has subsequently been used in a number of different applications, e.g.,~\cite{WoodruffZ12,BlasiokBCKY17,LevinSW18,BravermanWZ21,WoodruffZ21,MahabadiWZ22,BravermanMWZ23} and has the following intuition. 

For ease of discussion, consider the level set $\Gamma_i$ consisting of the coordinates $j\in[n]$ such that $f_j\in\left(\frac{F_p}{2^i},\frac{F_p}{2^{i+1}}\right]$ for each $i$, though we remark that for technical reasons, we shall ultimately define the level sets in a slightly different manner. 
Because the level sets partition the universe $[n]$, then if we define the contribution $C_i:=\sum_{j\in\Gamma_i}(f_k)^p$ of a level set $\Gamma_i$ to be the sum of the contributions of all their coordinates, then we can decompose the moment $F_p$ into the sum of the contributions of the level sets, $F_p=\sum_i C_i$. 
Moreover, it suffices to accurately estimate the contributions of the ``significant'' level sets, i.e., the level sets whose contribution is at least a $\poly\left(\eps,\frac{1}{\log(nm)}\right)$ fraction of the $F_p$ moment, and crudely estimate the contributions of the insignificant level sets. 

\cite{IndykW05} observed that the contributions of the significant level sets can be estimated by approximating the frequencies of the heavy hitters for substreams induced by subsampling the universe at exponentially smaller rates. 
We emphasize that whereas we previously subsampled updates of the stream $[m]$ for heavy hitters, we now subsample elements of the universe $[n]$. 
That is, we create $L=\cO{\log(nm)}$ substreams so that for each $\ell\in[L]$, we subsample each element of the universe $[n]$ into the substream with probability $\frac{1}{2^{\ell-1}}$. 
For example, a single item with frequency $F_p^{1/p}$ will be a heavy hitter in the original stream, which is also the stream induced by $\ell=1$. 
On the other hand, if there are $n$ items with frequency $(F_p/n)^{1/p}$, then they will be $L_p$-heavy hitters at a subsampling level where in expectation, there are roughly $\Theta\left(\frac{1}{\eps^p}\right)$ coordinates of the universe that survive the subsampling. 
Then \cite{IndykW05} notes that $(1+\eps)$-approximations to the contributions of the surviving heavy-hitters can then be rescaled inversely by the sampling rate to obtain good approximations of the contributions of each significant level set. 
The same procedure also achieves crude approximations for the contributions of insignificant level sets, which overall suffices for a $(1+\eps)$-approximation to the $F_p$ moment. 

The key advantage in adapting this framework over other $F_p$ estimation algorithms, e.g.,~\cite{AlonMS99,Indyk06,Li08,GangulyW18} is that we can then use our heavy hitter algorithm $\FullSampleAndHold$ to provide $(1+\eps)$-approximations to the heavy hitters in each substream while guaranteeing a small number of internal state changes.   

\subsection{Additional Intuition and Comparison with Previous Algorithms}
There are a number of differences between our algorithm and the sample-and-hold approach of \cite{EstanV02}. Firstly, once \cite{EstanV02} samples an item, a counter will be initialized and maintained indefinitely for that item. By comparison, our algorithm will sample more items than the total space allocated to the algorithm, so we must carefully delete a number of sampled items. In particular, it is NOT correct to delete the sampled items with the largest counter. Secondly, \cite{EstanV02} updates a counter each time a subsequent instance of the sample arrives. Because our paper is focused on a small number of internal state changes, our algorithm cannot afford such a large number of updates. Instead, we maintain approximate counters that sacrifice accuracy in exchange for a smaller number of internal state changes. We show that the loss in accuracy can be tolerated in choosing which samples to delete. 

Another possible point of comparison is the precision sampling technique of \cite{AndoniKO11}, which is a linear sketch, so although it has an advantage of being able to handle insertion-deletion streams, unfortunately it must also be updated for each stream element arrival, resulting in a linear number of internal state changes. Similarly, a number of popular heavy-hitter algorithms such as Misra-Gries~\cite{MisraG82}, CountMin~\cite{CormodeM05}, CountSketch~\cite{CharikarCF04}, and SpaceSaving~\cite{MetwallyAA05} can only achieve a linear number of internal state changes. By comparison, our sample-and-hold approach results in a sublinear number of internal state changes. 

Finally, several previous algorithms are also based on sampling a number of items throughout the stream, temporarily maintaining counters for those items, and then only keeping the items that are globally heavy, e.g., \cite{BravermanO13,BravermanKSV14}. It is known that these algorithms suffer a bottleneck at $p=3$, i.e., they cannot identify the $L_p$ heavy-hitters for $p<3$. The following counterexample shows why these algorithms cannot identify the $L_2$ heavy-hitters and illustrates a fundamental difference between our algorithms. 

Suppose the stream consists of $\sqrt{n}$ blocks of $\sqrt{n}$ updates. Among these updates, there are $\sqrt{n}$ items with frequency $n^{1/4}$, which we call pseudo-heavy. 
There is a single item with frequency $\sqrt{n}$, which is the heavy-hitter. 
Then the remaining items each have frequency $1$ and are called light. 
Note that the second moment of the stream is $\Theta(n)$, so that only the item with frequency $\sqrt{n}$ is the heavy-hitter, for constant $\eps<1$. 

Let $S=\{1,2,\ldots,n^{1/4}\}$ and suppose for each $w\in S$, block $w$ is a special block that consists of $n^{1/4}$ different pseudo-heavy items, each with frequency $n^{1/4}$. 
Let $T=x+S$, for $x=\{1,2,\ldots,n^{1/8}\}$, so that $T$ consists of the $n^{1/8}$ blocks after each special block. 
Each block in $T$ consists of $n^{1/8}$ instances of the heavy-hitter, along with $\sqrt{n}-n^{1/8}$ light items. 
The remaining blocks all consist of light items. 

Observe that without dynamic maintenance of counters for different scales, in each special block, we will sample $\polylog(n)$ pseudo-heavy items whose counters each reach about $\tO{n^{1/4}}$. But then each time a heavy-hitter is sampled, its count will not exceed the pseudo-heavy item before the number of counters before it is deleted, because it only has $n^{1/8}$ instances in its block. Thus with high probability, the heavy-hitter will never be found, and this is an issue with previously existing sampling-based algorithms, e.g., \cite{BravermanO13,BravermanKSV14}. 

Our algorithm overcomes this challenge by only performing maintenance on counters that have been initialized for a similar amount of time. Thus in the previous example, the counters for the heavy-hitters will not be deleted because they are not compared to the counters for the pseudo-heavy items until the heavy-hitters have sufficiently high frequency. By comparison, existing algorithms will retain counters for the pseudo-heavy items, because they locally look ``larger'', at the expense of the true heavy-hitter.

\begin{table}[!htb]
\centering
\begin{tabular}{|c|c|c|}\hline
Reference & State Changes & Setting \\\hline
\cite{MisraG82} & $\cO{m}$ & $L_1$-Heavy Hitters Only \\\hline
\cite{CormodeM05} & $\cO{m}$ & $L_1$-Heavy Hitters Only \\\hline
\cite{MetwallyAA05} & $\cO{m}$ & $L_1$-Heavy Hitters Only \\\hline
\cite{CharikarCF04} & $\cO{m}$ & $L_2$-Heavy Hitters \\\hline
Our Work & $\tO{n^{1-1/p}}$ & $L_2$-Heavy Hitters \\\hline
\end{tabular}
\caption{Summary of our results compared to existing results for a stream of length $m$ on a universe of size $n$. We emphasize that reporting $L_2$ heavy-hitters includes the $L_1$ heavy-hitters. All algorithms use near-optimal space.}
\tablelab{table:summary}
\end{table}

\subsection{Preliminaries}
We use $[n]$ to denote the set $\{1,2,\ldots,n\}$ for an integer $n>0$. 
We write $\poly(n)$ to denote a fixed univariate polynomial in $n$ and similarly, $\poly(n_1,\ldots,n_k)$ to denote a fixed multivariate polynomial in $n_1,\ldots,n_k$. 
We use $\tO{f(n_1,\ldots,n_k)}$ for a function $f(n_1,\ldots,n_k)$ to denote $f(n_1,\ldots,n_k)\cdot\poly(\log f(n_1,\ldots,n_k))$. 
For a vector $f\in\mathbb{R}^n$, we use $f_i$ for $i\in[n]$ to denote the $i$-th coordinate of $f$. 

\paragraph{Model.} 
In our setting, an insertion-only stream $S$ consists of a sequence of updates $u_1,\ldots,u_m$. 
In general, we do not require $m$ to be known in advance, though in some cases, we achieve algorithmic improvements when a constant-factor upper bound on $m$ is known in advance; we explicitly clarify the setting in these cases. 
For each $t\in[m]$, we have $u_t\in[n]$, where without loss of generality, we use $[n]$ to represent an upper bound on the universe, which we assume to be known in advance. 
The stream $S$ defines a frequency vector $f\in\mathbb{R}^n$ by $f_i=|\{t\,\mid\,u_t=i\}|$, so that for each $i\in[n]$, the $i$-th value of the frequency vector $S$ is how often $i$ appears in the data stream $S$. 
Observe that through a simple reduction, this model suffices to capture the more general case where some coordinate is increased by some amount in $\{1,\ldots,M\}$ for some integer $M>0$ in each update. 

For a fixed algorithm $\calA$, suppose that at each time $t\in[m]$,  the algorithm $\calA$ maintains a memory state $\sigma_t$. 
Let $|\sigma_t|$ denote the size of the memory state, in words of space, where each word of space is assumed to used $\cO{\log n+\log m}$ bits of space. 
We say the algorithm $\calA$ uses $s$ words of space or equivalently, $\cO{s\log n+\log m}$ bits of space, if $\max_{t\in[m]}|\sigma_t|\le s$. 
For each $t\in[m]$, let $X_t$ be the indicator random variable for whether the algorithm $\calA$ changed its memory state at time $t$. 
That is, $X_t=1$ if $\sigma_t\neq\sigma_{t-1}$ and $X_t=0$ if $\sigma_t=\sigma_{t-1}$, where we use the convention $\sigma_0=\emptyset$. 
Then we say the total number of internal changes by the algorithm $\calA$ is $\sum_{t=1}^m x_t$. 

We also require the following definition of Morris counters to provide approximate counting. 
\begin{theorem}[Morris counters]
\cite{Morris78,NelsonY22}
\thmlab{thm:morris}
There exists an insertion-only streaming algorithm (Morris counter) that uses space (in bits) $\cO{\log\log n+\log\frac{1}{\eps}+\log\log\frac{1}{\delta}}$ and outputs a $(1+\eps)$-approximation to the frequency of an item $i$, with probability at least $1-\delta$. 
Moreover, the algorithm is updated at most $\poly\left(\log n,\frac{1}{\eps},\log\frac{1}{\delta}\right)$ times over the course of the stream. 
\end{theorem}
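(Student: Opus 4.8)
The plan is to recall the classical Morris counter together with the refinements needed to control $\eps$ and $\delta$. First I would define the base-$b$ Morris counter for a parameter $b = 1+\gamma > 1$: maintain a nonnegative integer state $X$, initialized to $X \gets 0$; on each increment, draw a fresh uniform $[0,1]$ variable and set $X \gets X+1$ with probability $b^{-X}$; to answer a query, output $\widehat{n} = \frac{b^X - 1}{b-1}$. The first step is a one-line induction on the number $n$ of increments showing the estimator is unbiased: from $\Ex{b^{X+1} \mid X} = b^X(1 - b^{-X}) + b^{X+1} b^{-X} = b^X + (b-1)$ one gets $\Ex{b^{X_n}} = (b-1)n + 1$, i.e.\ $\Ex{\widehat n} = n$. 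A second, entirely analogous induction on $\Ex{b^{2X_n}}$ yields $\Var{\widehat n} = \tfrac{(b-1)\,n(n-1)}{2} \le \tfrac{\gamma}{2}\,n^2$.

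Next I would instantiate $\gamma = \Theta(\eps^2)$, so that Chebyshev's inequality gives $\PPr{|\widehat n - n| > \eps n} \le \tfrac{\gamma}{2\eps^2} \le \tfrac13$ for a single counter; this is the ``Morris$+$'' step. To amplify the success probability I would run $t = \Theta(\log\tfrac1\delta)$ independent copies of this counter and return the median of their estimates; a Chernoff bound on the number of copies whose estimate lands in $[(1-\eps)n,(1+\eps)n]$ shows the median is a $(1+\eps)$-approximation except with probability $\delta$. For the space bound, since the count never exceeds $n$, the state $X$ of a base-$(1+\gamma)$ counter never exceeds $\log_{1+\gamma} n = \cO{\tfrac{\log n}{\eps^2}}$, so a single copy is stored in $\cO{\log\log n + \log\tfrac1\eps}$ bits. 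Finally, the only state change a copy ever performs is incrementing $X$, which happens at most $\log_{1+\gamma} n = \cO{\tfrac{\log n}{\eps^2}}$ times; across all $t$ copies this is $\cO{\tfrac{\log n\,\log(1/\delta)}{\eps^2}} = \poly\!\left(\log n, \tfrac1\eps, \log\tfrac1\delta\right)$ state changes, as required.

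The two moment computations and the Chernoff amplification are routine. The one genuinely delicate point is the claimed space bound $\cO{\log\log n + \log\tfrac1\eps + \log\log\tfrac1\delta}$: the naive median-of-Morris$+$ construction above instead uses $\cO{(\log\log n + \log\tfrac1\eps)\log\tfrac1\delta}$ bits, and shaving the failure probability down to an \emph{additive} $\log\log\tfrac1\delta$ term --- rather than a multiplicative $\log\tfrac1\delta$ factor --- is exactly the optimal approximate-counting result of \cite{NelsonY22}. I would invoke that construction as a black box for the space bound, noting that it likewise changes its state only $\poly(\log n, \tfrac1\eps, \log\tfrac1\delta)$ times over the stream.
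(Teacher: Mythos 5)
Your proposal is essentially correct, and it is worth noting that the paper itself offers \emph{no} proof of \thmref{thm:morris} --- the statement is simply cited to \cite{Morris78,NelsonY22} as a known result --- so there is no ``paper's own proof'' to compare against. Your reconstruction of the classical Morris analysis is right: the two inductions giving $\Ex{b^{X_n}} = (b-1)n+1$ and $\Var{\widehat{n}} = \frac{(b-1)n(n-1)}{2}$ are correct, Chebyshev with $\gamma = \Theta(\eps^2)$ gives constant failure probability, and median-of-$\Theta(\log\frac{1}{\delta})$ amplification pushes it to $\delta$. Your bound on the number of state changes is also correct and, usefully, deterministic: each copy's state $X$ is monotone nondecreasing and confined to $\{0,1,\ldots,\cO{\log n/\eps^2}\}$, so it changes at most $\cO{\log n/\eps^2}$ times, giving $\cO{\frac{\log n \log(1/\delta)}{\eps^2}} = \poly(\log n, \tfrac1\eps, \log\tfrac1\delta)$ total.

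You are also right to flag the one genuine subtlety: the naive median-of-Morris$+$ construction achieves only $\cO{(\log\log n + \log\tfrac1\eps)\log\tfrac1\delta}$ bits, whereas the theorem asserts the additive bound $\cO{\log\log n + \log\tfrac1\eps + \log\log\tfrac1\delta}$. Shaving that multiplicative $\log\tfrac1\delta$ down to an additive $\log\log\tfrac1\delta$ is exactly the content of \cite{NelsonY22}, and invoking it as a black box is the appropriate move --- this matches what the paper itself does by citing the result rather than proving it. Your final remark, that the \cite{NelsonY22} counter likewise has $\poly(\log n, \tfrac1\eps, \log\tfrac1\delta)$ state changes because its state space is that small and the counter is monotone, is the right justification for the last sentence of the theorem.
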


\section{Heavy Hitters}
\seclab{sec:hh}
In this section, we first describe our algorithm for identifying and accurately approximating the frequencies of the $L_p$-heavy hitters. 

\subsection{Sample and Hold}
A crucial subroutine for our $F_p$ estimation algorithm is the accurate estimation of heavy hitters. 
In this section, we first describe such a subroutine $\SampleAndHold$ for approximating the frequencies of the $L_p$-heavy hitters under the assumption that $F_p$ is not too large. 

We now describe our algorithm for the case $p\ge 2$. 
Our algorithm creates a reservoir $Q$ of size $\kappa=\widetilde{\calO}_\eps(n^{1-2/p})$ and samples each item of the stream into the reservoir $Q$ with probability roughly $\widetilde{\Theta}_\eps\left(\frac{1}{n^{1/p}}\right)$. 
If the reservoir $Q$ is full when an item of the stream is sampled, then a uniformly random item of $Q$ is replaced with the stream update. 
Thus if the stream has length $n$, then we will incur $\widetilde{\calO}_\eps(n^{1-2/p})$ internal state changes due to the sampling. 
For stream length $m$, we set the sampling probability to be roughly $\frac{\widetilde{\calO}_\eps(n^{1-1/p})}{m}$. 

Our algorithm also checks each stream update to see if it matches an item in the reservoir and creates a counter for the item if there is a match. 
In other words, if $j\in[n]$ arrives as a stream update and $j\in Q$ is in the reservoir, then our algorithm $\SampleAndHold$ creates a separate counter for $j$ to count the number of subsequent instances of $j$. 
In addition, each time the number of counters becomes too large, i.e., exceeds $\widetilde{\calO}_\eps(n^{1-2/p})$, we remove half of the counters that have been maintained for a time between $2^z$ and $2^{z+1}$, for each integer $z>0$. 
In particular, we remove the counters with the smallest tracked frequencies for each of the groups. 
To reduce the number of internal state changes, we use Morris counters rather than exact counters for each item. 

For $p\in[1,2]$, we set $\kappa=\poly_{\eps}(\log n)$ to be the size of the reservoir $Q$, so that the total space is $\poly_{\eps}(\log n)$ while the total number of internal state changes remains $\widetilde{\Omega}_\eps(n^{1-1/p})$. 
Our algorithm $\SampleAndHold$ appears in \algref{alg:sample:hold}. 

\begin{algorithm}[!htb]
\caption{$\SampleAndHold$}
\alglab{alg:sample:hold}
\begin{algorithmic}[1]
\Require{Stream $s_1,\ldots,s_m$ of items from $[n]$, accuracy parameter $\eps\in(0,1)$, $p\ge 1$}
\Ensure{Accurate estimation of an $L_p$ heavy hitter frequency}
\State{$\kappa_1\gets\Theta\left(\frac{\log^{11+3p}(mn)}{\eps^{4+4p}}\right)$, $\gamma\gets2^{20}p$}
\If{$m\ge n$}
\State{$\varrho\gets\frac{\gamma^2n^{1-1/p}\log^4(nm)}{\eps^2 m}$, $\kappa_2\gets\Theta\left(\frac{n^{1-2/p}\log^{11+3p}(mn)}{\eps^{4+4p}}\right)$}
\ElsIf{$m<n$}
\State{$\varrho\gets\frac{\gamma^2m^{1-1/p}\log^4(nm)}{\eps^2 m}$, $\kappa_2\gets\Theta\left(\frac{m^{1-2/p}\log^{11+3p}(mn)}{\eps^{4+4p}}\right)$}
\EndIf
\State{$\kappa\gets\kappa_1$ if $p\in[1,2)$, $\kappa\gets\kappa_2$ if $p\ge2$}
\State{$k\sim\Uni([200p\kappa\log^2(nm),202p\kappa\log^2(nm)])$}
\State{$q_i\gets\emptyset$ for $i\in[k]$}
\For{$t=1$ to $t=m$}
\If{there is a Morris counter for $s_t$}
\State{Update the Morris counter}
\ElsIf{there exists $i\in[k]$ with $q_i=s_t$}
\Comment{item is in the reservoir}
\State{Start a Morris counter for $s_t$}
\Comment{hold a counter for the item}
\Else
\State{Pick $\mu_t\in[0,1]$ uniformly at random}
\If{$\mu_t<\varrho$}
\Comment{with probability $\varrho$}
\State{Pick $i\in[k]$ uniformly at random}
\State{$q_i\gets s_t$}
\EndIf
\EndIf
\If{there exist $k$ active Morris counters initialized between time $t-2^z$ and $t-2^{z+1}$ for integer $z>0$}
\Comment{too many counters}
\State{$k\gets\Uni([200p\kappa\log^2(nm),202p\kappa\log^2(nm)])$}
\State{Retain the $\frac{k}{2}$ counters initialized between time $t-2^z$ and $t-2^{z+1}$ for integer $z>0$ with largest approximate frequency}
\EndIf
\EndFor
\State{\Return the estimated frequencies by the Morris counters}
\end{algorithmic}
\end{algorithm}

We first analyze how many additional counters are available at a time when a heavy hitter $j$ is sampled. 
\begin{restatable}{lemma}{lemrandomcounters}
\lemlab{lem:random:counters}
Let $j\in[n]$ be an item with $(f_j)^p\ge\frac{\eps^2\cdot F_p}{2^{10}\gamma\log^2(nm)}$ and let $J=\{t\in[m]\,\mid\, s_t=j\}$. 
Let $k\in[200p\kappa\log^2(nm),202p\kappa\log^2(nm)]$ be chosen uniformly at random. 
Let $v$ be the last time that $j$ is sampled by the algorithm. 
Then with probability at least $1-\frac{1}{50p\log^2(nm)}$, there are at most $k-\kappa$ counters at time $v$.  
\end{restatable}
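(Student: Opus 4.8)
The plan is to localize to the most recent execution of the counter‑halving step before time $v$, and then exploit that a single halving‑interval sweeps through far more than $\kappa$ distinct ``counter levels.'' Let $v_0\le v$ be the time of the last halving step (or $v_0=0$ if none has occurred), let $k$ be the value of the reservoir/threshold variable in force throughout $(v_0,v]$, and for $t\in[m]$ let $N_t$ be the number of active Morris counters at time $t$. Since counters are deleted only by the halving step and $v_0$ is the last such step at or before $v$, the sequence $N_t$ is non‑decreasing on $(v_0,v]$; also $N_v<k$ (otherwise a halving step fires in $(v_0,v]$), and $N_{v_0}\le\tfrac12\cdot 202p\kappa\log^2(nm)+O(\log m)\le 102\,p\kappa\log^2(nm)$ because the halving step keeps at most half of the counters in each of the $O(\log m)$ age buckets. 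Writing $a=200\,p\kappa\log^2(nm)$ for the lower endpoint of the interval from which $k$ is drawn and $W=2\,p\kappa\log^2(nm)$ for its width, note that if $N_v\le a-\kappa$ then $N_v\le k-\kappa$ holds deterministically since $k\ge a$; hence it suffices to bound the probability of the ``top band'' event $N_v\in(k-\kappa,k)$.

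Next I would recast that event combinatorially. Conditioning on the execution up to $v_0$ fixes $N_{v_0}=:m_0\le 102\,p\kappa\log^2(nm)$. Letting $c_1<c_2<\cdots$ be the stream times of counter creations after $v_0$, the count equals $m_0+\ell$ on $[c_\ell,c_{\ell+1})$ as long as no halving step fires, so $N_v>k-\kappa$ says precisely that $v$ falls in one of the last fewer than $\kappa$ of these sub‑intervals before the count would reach $k$. Since $k\ge a$ and $m_0\le 102\,p\kappa\log^2(nm)$, the count traverses at least $a-m_0\ge 98\,p\kappa\log^2(nm)$ distinct levels before the next reset, so the top band is at most a $\kappa/(98\,p\kappa\log^2(nm))=1/(98\,p\log^2(nm))$ fraction of the levels. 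The heart of the argument is then to show that $v$ — the last time $j$ is sampled — is essentially uniformly spread across these levels, so that it lands in the top band with probability $O(1)\cdot\kappa/(98\,p\kappa\log^2(nm))\le\tfrac{1}{50\,p\log^2(nm)}$. The structural fact I would use is that $v$ is governed only by the arrival pattern of $j$ together with the $\varrho$‑biased coins tossed on the ``free'' arrivals of $j$ (those at which $j$ is neither held in the reservoir nor already tracked by a counter), which are largely decoupled from the global creation schedule $c_1,c_2,\dots$ driven by the other coordinates; revealing everything except $j$'s free‑arrival coins fixes the level/time bijection, after which one argues that the last success among those coins is spread over levels.

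The step I expect to be the main obstacle is ruling out the degenerate case in which the counter count idles for a long stretch of the stream at a single high level $c$ (a stretch containing no reservoir match) with $v$ inside it, since then $v$ is concentrated on one level rather than spread over $\Omega(p\kappa\log^2(nm))$ of them. Here I would fall back on the remaining randomness, namely the threshold $k$ itself: because $k$ is uniform over an interval of width $W=2\,p\kappa\log^2(nm)$ and, by deferring its revelation past $v$, is independent of the stream and of the reservoir/sampling coins conditioned only on the lower bound $k>N_v$, a fixed idling level $c$ lies in the dangerous band $(k-\kappa,k)$ only when $k\in(c,c+\kappa)$, an event of probability at most $\kappa/(W+1)$; pairing this with the probability that $v$ lands in that single idling stretch and summing over the $O(\log m)$ dyadic age scales contributes a lower‑order term once the constants $200$ versus $202$ in the interval are chosen with enough slack, and one further uses that $j$, being $L_p$‑heavy, is very likely held or already tracked (no resets can occur in a terminal idling stretch) so that $v$ is in fact not in such a stretch except with lower‑order probability. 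The genuinely delicate point throughout is that the variable $k$ also sets the reservoir size, so the creation schedule and the reservoir contents depend weakly on $k$; carefully controlling this coupling — using that all candidate values of $k$ lie within a $(1+\tfrac{1}{100})$‑factor of one another, so the induced perturbation of $N_v$ is a lower‑order effect absorbed by the $\Omega(p\kappa\log^2(nm))$ slack between $m_0$ and $a$ — is the main technical work; the remaining ingredients (the bound on $N_{v_0}$, the level/time bijection, the independence bookkeeping) are routine.
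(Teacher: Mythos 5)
Your ``fallback'' mechanism is, in fact, the paper's \emph{entire} argument, not a patch for a degenerate corner case. The paper never attempts to argue that $v$ is spread uniformly across the counter-creation schedule; that claim is not something you could hope to prove unconditionally, because $v$ is the last successful $\varrho$-coin among $j$'s arrivals and $j$'s arrival pattern is adversarial, so $v$ can be pinned to essentially any point of the creation schedule you like (e.g., $j$ arriving in a tight cluster). Instead, the paper conditions on the execution up to the relevant reset $T_w$, takes the (uncapped) post-reset creation schedule $\calT_w$ and hence the index of $v$ inside it as determined, and relies \emph{only} on the uniformity of the freshly resampled threshold $u_w$ to argue that a reset is unlikely to fire within $\kappa$ creations of $v$'s index; it then union-bounds over two consecutive thresholds $u_w,u_{w+1}$, since the draw of $u_w$ may cause the next reset to fire before $v$ is reached. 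Your observations that the resets make $N_t$ non-decreasing on $(v_0,v]$, that $N_v \le a-\kappa$ already forces the conclusion since $k\ge a$, and that the count sweeps $\Omega(p\kappa\log^2(nm))$ levels between resets, all match the paper's bookkeeping. You also correctly flag the subtlest point --- the threshold $k$ sets the reservoir size and so weakly influences the creation schedule whose position you want to treat as fixed; the paper handles this only implicitly (by asserting the first $100p\kappa\log^2(nm)$ entries of $\calT_i$ are independent of $u_i$), and your proposal to absorb the coupling using the $(1+\tfrac{1}{100})$-closeness of all candidate $k$ values is a reasonable way to make that step explicit. The genuine gap in your write-up is treating the $k$-randomness step as a last resort rather than as the primary (and only) engine: if you commit to it from the start, the ``level/time bijection'' and ``spreading $v$ over levels'' machinery becomes unnecessary. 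One further caution worth noting: the naive bound $\kappa/W=1/(2p\log^2(nm))$ is larger than the lemma's claimed $1/(50p\log^2(nm))$, so either the threshold interval needs to be wider or the retained band narrower than what your (and the paper's) back-of-the-envelope accounting suggests; the paper's stated per-reset probability of $1/(100p\log^2(nm))$ does not obviously follow from the $\kappa/W$ calculation, and you should not expect that part of the constant bookkeeping to close without adjusting parameters.
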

\begin{proof}
Consider any fixing of the random samples of the algorithm and the random choices of $k$, before time $v$. 
Let $T_1<T_2<\ldots$ be the sequence of times when the counters are reset. 
Note that since $k\in[200p\kappa\log^2(nm),202p\kappa\log^2(nm)]$, then each time the counters are reset, between $[100p\kappa^2(nm),101p\kappa\log^2(nm)]$ counters are newly allocated. 
Note that the sequence could be empty, in which case our claim is vacuously true. 
For each $T_i$, consider the times $\calT_i$ at which additional counters would be created after $T_i$ is fixed if there were no limit to the number of counters. 
Moreover, let $u_i$ be the choice of $k$ at time $T_i$. 
Note that the first $100p\kappa\log^2(nm)$ of the times in $\calT_i$ are independent of the choice of $u_i$, while latter times in $\calT_i$ may not actually be sampled due to the choice of $u_i$. 
Let $T_w$ be the first time for which which $v$ appears in the first $101p\kappa\log^2(nm)$ terms of the $\calT_w$. 
Then with probability at most $\frac{1}{100p\log^2(mn)}$, the choice of $u_w$ will be within $\kappa$ indices after $v$ in the sequence $\calT_i$. 
On the other hand, the choice of $u_w$ could cause $T_{w+1}$ to be before $v$, e.g., if $v$ is the $(101p\kappa\log^2(nm))$-th term and $u_w=100\kappa\log^2(nm)$, in which case the same argument shows that with probability at most $\frac{1}{100p\log^2(mn)}$, the choice of $u_{w+1}$ will be within $\kappa$ indices after $v$ in the sequence $\calT_{w+1}$. 
Note that since $u_w+u_{w+1}\ge200p\kappa^2(nm)$, then $v$ must appear before $T_{w+2}$. 
Therefore by a union bound, with probability at least $1-\frac{1}{50p\log^2(nm)}$, there are at most $k-\kappa$ counters at time $v$.  
\end{proof}
We next upper bound how many additional counters are created between the time a heavy hitter $j$ is sampled until it becomes too large to delete. 
\begin{restatable}{lemma}{lemnumnewcounters}
\lemlab{lem:num:new:counters}
Suppose $m\ge n$ and $F_p=\cO{\frac{n\log^3(nm)}{\eps^{4p}}}$. 
Let $j\in[n]$ be an item with $(f_j)^p\ge\frac{\eps^2\cdot F_p}{2^{10}\gamma\log^2(nm)}$. 
Let $J=\{t\in[m]\,\mid\, s_t=j\}$ and let $u\in J$ be chosen uniformly at random and suppose that the algorithm samples $j$ at time $u$. 
Then for $p\in[1,2)$, over the choice of $u$ and the internal randomness of the algorithm, the probability that fewer than $\kappa_1=\cO{\frac{\log^{11+3p}(mn)}{\eps^{4+4p}}}$ new counters are generated after $u$ and before $\frac{\eps^4\cdot F_p}{2^{10}\gamma\log^2(nm)}$ additional instances of $j$ arrive is at most $1-\frac{1}{100p\log(nm)}$. 

Similarly for $p\ge2$, over the choice of $u$ and the internal randomness of the algorithm, the probability that fewer than $\kappa_2=\cO{\frac{n^{1-2/p}\log^{11+3p}(mn)}{\eps^{4+4p}}}$ new counters are generated after $u$ and before $\frac{\eps^4\cdot F_p}{2^{10}\gamma\log^2(nm)}$ additional instances of $j$ arrive is at most $1-\frac{1}{100p\log(nm)}$. 
\end{restatable}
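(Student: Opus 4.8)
The plan is to isolate one interval of the stream and show that the number of Morris counters opened in it is of the claimed order. Fix the uniformly random copy $u\in J$ of $j$ and condition on the algorithm sampling $j$ at time $u$; since the sampling coin at time $u$ is independent of the positions of the stream, $u$ remains uniform over the $f_j$ copies of $j$. Put $\Delta=\frac{\eps^4 F_p}{2^{10}\gamma\log^2(nm)}$, let $w$ be the time at which the $\Delta$-th copy of $j$ after time $u$ arrives (and $w=m$ if fewer than $\Delta$ copies remain), and let $I=[u,w]$, with $m_I$ the number of updates in $I$. A new counter is opened during $I$ precisely when a coordinate currently residing in the reservoir recurs; since a coordinate is never resampled once it owns a counter, the number of such openings during $I$ is at most the number of reservoir insertions in $I$, and at least the number of those insertions whose coordinate recurs inside $I$ before its slot is overwritten. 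Thus the whole statement reduces to controlling $m_I$ together with the reservoir's eviction behavior.

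For the first count, the random choice of $u$ is the key: a telescoping sum over the $f_j$ positions of $j$ yields $\Ex{m_I}\le\frac{2\Delta m}{f_j}$, while for a constant fraction of the choices of $u$ one has $m_I\ge\Delta$ in the regime $\Delta\le f_j$ (the complementary regime $\Delta>f_j$, which arises once $p\ge2$ and $n$ is large, is treated separately with $w=m$). Feeding $\varrho=\Theta\left(\frac{n^{1-1/p}\log^4(nm)}{\eps^2 m}\right)$ into $\varrho\,\Ex{m_I}$ and simplifying with the three available facts — $m\le F_p$ (valid because $p\ge1$), the heavy-hitter bound $(f_j)^p\ge\frac{\eps^2 F_p}{2^{10}\gamma\log^2(nm)}$ (which caps $\Delta/f_j$ by $\eps^2(f_j)^{p-1}$), and the ambient assumption $F_p=\cO{\frac{n\log^3(nm)}{\eps^{4p}}}$ (which caps $(f_j)^{p-1}\le F_p^{1-1/p}$ by a power of $n$) — produces a bound on the expected number of reservoir insertions in $I$; I would then invoke Markov's inequality on $m_I$ to make this bound hold with probability $1-\tfrac{1}{100p\log(nm)}$ over $u$, and a Chernoff bound to pin the insertion count near its conditional mean, leaving only the translation of ``insertions'' into ``counters''.

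That translation — relating reservoir insertions inside $I$ to counters actually opened — is where I expect the main difficulty. The reservoir has $k=\Theta(\kappa\log^2(nm))$ slots and receives insertions at rate $\varrho$, so a slot survives $\Omega(k/\varrho)$ steps in expectation; hence a coordinate inserted during $I$ opens a counter if its next occurrence lands within that many steps, and one wants to show that all but a $1/\polylog(nm)$ fraction of insertions have this property, again exploiting the $F_p$ hypothesis to cap both the number of distinct coordinates and the total mass inside $I$. Because the stream is worst-case, an adversary can try to place coordinates so that reservoir entries are evicted before recurring or recur only far apart, so making this maturation argument robust to the layout is the delicate step; combined with the separate $\Delta>f_j$ analysis, the bookkeeping for the age-bucketed deletion rule, and the care needed to land exactly at $\kappa_1$ (resp.\ $\kappa_2$) rather than merely $\tO{\kappa_1}$, this is the bulk of the work.
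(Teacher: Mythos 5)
Your skeleton shares a few ingredients with the paper's argument — averaging over the uniformly random copy $u$ of $j$, multiplying a length bound by $\varrho$, and finishing with Markov — but it is missing the combinatorial decomposition that actually makes the counting close, and you explicitly punt on exactly that step.

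The paper does \emph{not} bound reservoir insertions in a single interval $I=[u,w]$ and then try to show that most of them ``mature.'' Instead it partitions the universe into frequency levels $W_\ell=\{i : f_i\in[2^{\ell-1},2^\ell)\}$ (so $|W_\ell|\le F_p/2^{p\ell}$), and partitions the stream into $\alpha=f_j+1$ blocks delimited by successive occurrences of $j$. For each level $\ell$ it bounds the expected number of $W_\ell$-items sampled per block, multiplies by the number $\min(2^i,2^\ell)$ of relevant blocks when examining the age-window $[2^i,2^{i+1}]$, and then invokes the structural fact that, once $2^i>2^\ell$, items of $W_\ell$ simply cannot persist as competitors of $j$ in that age bucket (their total frequency is too low). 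The claimed values of $\kappa_1,\kappa_2$ emerge precisely after summing the per-level, per-age-bucket bounds and taking a union bound over the $\cO{\log m}$ ages and $L=\cO{p\log(nm)}$ levels. This dual decomposition — by frequency scale and by age bucket — is the whole point; it is how the analysis matches the age-bucketed deletion rule in \algref{alg:sample:hold}.

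Your bound $\varrho\cdot\Ex{m_I}$ counts all reservoir insertions in $I$, and this is quantitatively too loose: a quick check at $p=2$ with $F_p=\Theta(n)$ gives roughly $n^{1-1/p}$ insertions in $I$, whereas $\kappa_2=\tO{n^{1-2/p}}=\polylog(nm)$. So to recover the lemma you would have to show that only an $\approx n^{-1/p}$ fraction of insertions ever becomes a counter, and moreover that the surviving counters are spread across age buckets in the way the deletion rule requires. You correctly identify this as ``the bulk of the work,'' but that is exactly the part the paper solves and your proposal does not: without grouping items by frequency and tracking them per block, the adversary can concentrate recurrences so that either the maturation probability is hard to control or the counters all land in $j$'s age bucket. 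In short, the single-interval/sampling-rate calculation is directionally reasonable but cannot yield $\kappa_1,\kappa_2$ on its own; the frequency-level decomposition and the $i\le\ell$ versus $i>\ell$ case split are the missing idea, not a finishing detail.

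Two smaller notes. First, your $\Delta>f_j$ case (which you flag) is indeed live for $p\ge2$ and large $F_p$, and the paper handles this implicitly through the block structure ($\alpha=f_j+1$ blocks, with no reference to an interval that ``runs out''), so the case split you anticipate evaporates in the paper's framing. Second, the constants and log-powers in $\kappa_1,\kappa_2$ come directly from the union bound over levels and ages; without that decomposition there is no natural place for the $\log^{11+3p}$ to appear, which is another sign the approaches are structurally different.
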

\begin{proof}
Let $L=\cO{p\log(nm)}$ and let $\ell\in[L]$ be fixed. 
Let $W_\ell$ be the items with frequency in $[2^{\ell-1},2^\ell)$ so that
\[W_\ell=\{i\in[n]\,\mid\,f_i\in[2^{\ell-1},2^\ell)\}.\]
and observe that $|W_\ell|\le\frac{F_p}{2^{p\ell}}$. 
Let $\varrho=\frac{\gamma^2n^{1-1/p}\log^4(nm)}{\eps^2 m}$ and $X=\frac{\eps^2 F_p^{1/p}}{2^{14}\gamma^2\log^4(nm)}$. 

We define $B_1,\ldots,B_\alpha$ to be blocks that partition the stream, so that the $i$-th block includes the items of the stream after the $(i-1)$-th instance of $f_j$, up to and including the $i$-th instance of $f_j$. 
We therefore have $\alpha=f_j+1$. 

Observe that since $(f_j)^p\ge\frac{\eps^2\cdot F_p}{2^{10}\gamma\log^2(nm)}$, then we have $f_j=\beta X$ for some $\beta>1$. 
%
Since $|W_\ell|\le\frac{F_p}{2^{p\ell}}$, then the expected number of unique items in $W_\ell$ contained in a block is at most $\frac{F_p}{2^{p\ell}\alpha}$, conditioned on any fixing of indices that are sampled. 
Thus in a block, the conditional expectation of the number of stream updates that correspond to items in $W_\ell$ is at most $\frac{F_p}{2^{(p-1)\ell}\alpha}$, and so the expected number of items in $W_\ell$ that are sampled in a block is at most $\frac{\varrho F_p}{2^{(p-1)\ell}\alpha}$. 
Therefore, the expected number of retained items in the previous and following $2^i$ blocks for $i\le\ell$ is at most 
\[\frac{\varrho F_p}{2^{(p-1)\ell}\alpha}\cdot\min(2^i,2^\ell)\le\frac{\varrho F_p}{2^{(p-2)\ell}\alpha}\le\frac{2\varrho F_p}{2^{(p-2)\ell}X},\]
since $\alpha=f_j+1$ and by assumption $(f_j)^p\ge\frac{\eps^2\cdot F_p}{2^{10}\gamma\log^2(nm)}$, but $X=\frac{\eps^2 F_p^{1/p}}{2^{14}\gamma^2\log^4(nm)}$. 
For $i>\ell$, note that since $W_\ell$ only contains elements of frequency $2^\ell$, then no elements of $W_\ell$ will be retained over $j$ once the consideration is for $2^i>2^\ell$ blocks. 

For $p\in[1,2)$, note that $2^\ell\le F_p^{1/p}$, so that $\frac{1}{2^{(p-2)\ell}}=2^{(2-p)\ell}\le F_p^{2/p-1}$. 
Therefore, after $2^i$ blocks, the expected number of items in $W_\ell$ is at most $\frac{2\varrho F_p}{2^{(p-2)\ell}X}\le\frac{2\varrho F_p^{2/p}}{X}$. 
For $\varrho=\frac{\gamma^2n^{1-1/p}\log^4(nm)}{\eps^2 m}$ and $X=\frac{\eps^2 F_p^{1/p}}{2^{14}\gamma^2\log^4(nm)}$, we have that the expected number of retained items in the previous and following $2^i$ blocks is at most
\[\cO{\frac{2\gamma^4 n^{1-1/p}F_p^{1/p}\log^8(nm)}{\eps^4 m}}=\cO{\frac{\log^{8+3p}(nm)}{\eps^{4+4p}}},\]
for $m\ge n$ and $F_p=\cO{\frac{n\log^{3p}(nm)}{\eps^{4p}}}$.
By Markov's inequality, we have that with probability $1-\frac{1}{100p\log^3(nm)}$ over a random time $u$, the number of counters for the items with frequency in $[2^{\ell-1},2^\ell)$ is at most $\kappa_1=\cO{\frac{\log^{11+3p}(nm)}{\eps^{4+4p}}}$ across $2^i$ blocks before and after $u$. 
Thus, by a union bound over all $i=\cO{\log m}$ and $L=\cO{p\log(nm)}$ choices of $\ell$, we have that with probability at least $1-\frac{1}{100p\log(nm)}$, $\kappa_1$ new counters are generated after $u$.

For $p\ge 2$, we have that the expected number of retained items across $2^i$ blocks is at most $\frac{2\varrho F_p}{2^{(p-2)\ell}X}\le\frac{2\varrho F_p}{X}$. 
For $\varrho=\frac{\gamma^2n^{1-1/p}\log^4(nm)}{\eps^2 m}$ and $X=\frac{\eps^2 F_p^{1/p}}{2^{14}\gamma^2\log^4(nm)}$, we have that the expected number of retained items across $2^i$ blocks is at most
\[\cO{\frac{2\gamma^4 (nF_p)^{1-1/p}\log^8(nm)}{\eps^4 m}}=\cO{\frac{n^{1-2/p}\log^{8+3p}(mn)}{\eps^{4+4p}}},\]
for $m\ge n$ and $F_p=\cO{\frac{n\log^{3p}(nm)}{\eps^{4p}}}$. 
By Markov's inequality, we have that with probability $1-\frac{1}{100p\log^3(nm)}$ over a random time $u$, the number of counters for the items with frequency in $[2^{\ell-1},2^\ell)$ is at most $\kappa_2=\cO{\frac{n^{1-2/p}\log^{11+3p}(mn)}{\eps^{4+4p}}}$ across $2^i$ blocks after $u$. 
Moreover, observe that the number of counters generated within $2^z$ timesteps is certainly at most the number of counters generated within $2^z$ blocks. 
Thus, by a union bound over all $z=\cO{\log m}$ and $L=\cO{p\log(nm)}$ choices of $\ell$, we have that with probability at least $1-\frac{1}{100p\log(nm)}$, $\kappa_2$ new counters are generated after $u$. 
\end{proof}

By \lemref{lem:random:counters} and \lemref{lem:num:new:counters}, we have:
\begin{lemma}
\lemlab{lem:counters:bound}
Suppose $m\ge n$ and $F_p=\cO{\frac{n\log^3(nm)}{\eps^{4p}}}$. 
Let $j\in[n]$ be an item with $(f_j)^p\ge\frac{\eps^2\cdot F_p}{2^{10}\gamma\log^2(nm)}$. 
Then with probability at least $1-\frac{1}{100p\log(nm)}$, the counters are not reset between the times at which $j$ is sampled and $\frac{\eps^4\cdot F_p}{2^{10}\gamma\log^2(nm)}$ occurrences of $j$ arrive after it is sampled. 
\end{lemma}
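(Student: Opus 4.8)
The plan is to derive the lemma from \lemref{lem:random:counters} and \lemref{lem:num:new:counters} by a short counting argument. The key observation is that a counter reset is \emph{triggered} only when some dyadic time-window currently holds $k$ active Morris counters, and hence, a fortiori, only when at least $k$ counters are active in total. So it suffices to show that, with probability at least $1-\frac{1}{100p\log(nm)}$, the total number of active counters stays strictly below $k$ throughout the relevant window, namely from the (last) time $v$ at which $j$ is sampled until the subsequent $\frac{\eps^4\cdot F_p}{2^{10}\gamma\log^2(nm)}$ occurrences of $j$ have arrived; on this event the Morris counter created for $j$ just after time $v$ is never deleted.

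The two ingredients feed in directly. \lemref{lem:random:counters} gives, with probability at least $1-\frac{1}{50p\log^2(nm)}$, that at time $v$ there are at most $k-\kappa$ active counters, where $k$ is the reset threshold in force and $\kappa\in\{\kappa_1,\kappa_2\}$ is the reservoir size for the relevant range of $p$. \lemref{lem:num:new:counters} gives, under the hypotheses $m\ge n$ and $F_p=\cO{\frac{n\log^3(nm)}{\eps^{4p}}}$, that with probability at least $1-\frac{1}{100p\log(nm)}$ fewer than $\kappa$ new counters are created between the sampling of $j$ and the arrival of the next $\frac{\eps^4\cdot F_p}{2^{10}\gamma\log^2(nm)}$ copies of $j$. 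On the intersection of these two events, consider the first putative reset at a time $T$ inside the window: since no reset occurred in $[v,T)$, no counter was removed and $k$ did not change there, so the counters active just before $T$ number at most $(k-\kappa)+(\kappa-1)=k-1<k$ --- contradicting the reset condition at $T$. Hence no reset occurs, and a union bound over the two failure probabilities gives the claim (the $\frac{1}{50p\log^2(nm)}$ term is lower order and is absorbed, or one slightly inflates the constant inside $\kappa$).

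The step I expect to be the main obstacle is making this composition fully rigorous: \lemref{lem:random:counters} is stated at the \emph{last} sampling time $v$, while \lemref{lem:num:new:counters} is proved via Markov's inequality over a \emph{uniformly random} sampling instance $u\in J$, and both statements speak of ``the current $k$'', which is only well-defined once one has shown no intervening reset --- a mild circularity that the ``first reset'' framing is designed to break. I would make sure \lemref{lem:num:new:counters}'s counting bound can legitimately be instantiated at (or used to cover the window starting at) $v$, either by reusing its Markov argument for this particular stopping time or by first running it at a random $u$ and then patching the short interval between $u$ and $v$; this is the only place where the bookkeeping is delicate, since everything else is a one-line union bound.
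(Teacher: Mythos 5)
Your proposal is correct and matches the paper exactly: the paper's entire proof of \lemref{lem:counters:bound} is the one-line citation ``By \lemref{lem:random:counters} and \lemref{lem:num:new:counters}, we have:'', i.e., precisely the union-bound composition you describe, built on your observation that a reset cannot fire while fewer than $k$ counters are active. The mismatch you flag between the last sampling time $v$ in \lemref{lem:random:counters} and the uniformly random $u\in J$ in \lemref{lem:num:new:counters} (together with the mild circularity in what ``the current $k$'' means before ruling out an intervening reset) is a genuine subtlety that the paper passes over without comment; your ``first reset'' framing and your caution about legitimately instantiating \lemref{lem:num:new:counters} at $v$ are both warranted.
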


We now claim that a heavy hitter $j$ will be sampled early enough to obtain a good approximation to its overall frequency. 
\begin{restatable}{lemma}{lemonesamplehold}
\lemlab{lem:one:sample:hold}
Suppose $F_p=\cO{\frac{n\log^3(nm)}{\eps^{4p}}}$. 
Let $j\in[n]$ be an item with $(f_j)^p\ge\frac{\eps^2\cdot F_p}{2^{10}\gamma\log^2(nm)}$. 
Then with probability at least $0.99$, $\SampleAndHold$ outputs $\widehat{f_j}$ such that
\[\left(1-\frac{\eps}{8\log(nm)}\right)\cdot(f_j)^p\le(\widetilde{f_j})^p\le\left(1+\frac{\eps}{8\log(nm)}\right)\cdot(f_j)^p.\]
\end{restatable}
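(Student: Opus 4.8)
The plan is to follow a single heavy hitter $j$ and show that, with probability at least $0.99$, three events occur: (A) some occurrence of $j$ places $j$ into the reservoir and $j$'s reservoir slot survives until the next occurrence of $j$, so that a Morris counter for $j$ is created, and moreover this happens before an $\frac{\eps}{64p\log(nm)}$-fraction of $j$'s occurrences have gone by; (B) that Morris counter is never deleted, hence it records every occurrence of $j$ from its creation onward; and (C) the Morris counter's reported value has multiplicative error at most $\frac{\eps}{64p\log(nm)}$. Event (C) is immediate from \thmref{thm:morris} run with accuracy parameter $\frac{\eps}{64p\log(nm)}$ and a small constant failure probability. Granting (A)--(C), if $T$ denotes the number of occurrences of $j$ preceding the creation of its Morris counter, then $T\le\tfrac{\eps}{64p\log(nm)}f_j$ and the reported estimate satisfies $\bigl(1-\tfrac{\eps}{64p\log(nm)}\bigr)(f_j-T)\le\widehat{f_j}\le\bigl(1+\tfrac{\eps}{64p\log(nm)}\bigr)f_j$ (the counter never over-counts); raising to the $p$-th power and using $(1+x)^p\le e^{px}$ and $(1-x)^{2p}\ge 1-2px$ for $x=\cO{\eps/(p\log(nm))}$ yields the claimed two-sided bound on $(\widehat{f_j})^p$.

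For event (B), \lemref{lem:counters:bound} (which combines \lemref{lem:random:counters} and \lemref{lem:num:new:counters}) gives that, except with probability $\frac{1}{100p\log(nm)}$, no reset occurs during the first $\frac{\eps^4 F_p}{2^{10}\gamma\log^2(nm)}$ occurrences of $j$ after $j$ is sampled. By event (A), either this window exhausts all remaining occurrences of $j$ — in which case $j$'s counter is never reset and we are done — or it leaves $j$'s Morris counter with value $V\gtrsim\frac{\eps^4 F_p}{2^{10}\gamma\log^2(nm)}$. In the latter case I would show $j$'s counter is retained by every subsequent reset: each time-band inspected by a reset holds at most $k$ counters, so if $j$'s counter were ever in the smaller half of its band then at least $k/2$ distinct items in that band have frequency $\gtrsim V$, forcing $F_p\gtrsim\tfrac{k}{2}V^p$, which the settings $\kappa_1=\poly_{\eps}(\log(nm))$ for $p\in[1,2)$ and $\kappa_2=\tO{n^{1-2/p}/\eps^{4+4p}}$ for $p\ge2$ rule out.

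The crux is event (A). Enumerate the occurrences of $j$ as $u_1<u_2<\cdots<u_{f_j}$. While no Morris counter for $j$ exists, each occurrence $u_r$ independently places $j$ into the reservoir with probability $\varrho$, and, conditioned on that, $j$'s slot is overwritten before $u_{r+1}$ with probability at most $(u_{r+1}-u_r)\varrho/k$; hence $u_r$ ``succeeds'' (causing a Morris counter for $j$ at $u_{r+1}$) with probability at least $\varrho\bigl(1-(u_{r+1}-u_r)\varrho/k\bigr)$. Restrict to the first $T:=c\,\eps f_j/(p\log(nm))$ occurrences of $j$ for a small absolute constant $c$; their gaps sum to at most $m$, so by Markov at least $T/2$ of them have length $\le 2m/T$. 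Using $F_p\ge m$ and the hypothesis $F_p=\cO{n\log^3(nm)/\eps^{4p}}$ (which forces $m=\tO{n}$), the heavy-hitter bound $(f_j)^p\ge\frac{\eps^2 F_p}{2^{10}\gamma\log^2(nm)}$, and the settings of $\varrho$ and $k$ in \algref{alg:sample:hold}, I would verify that $(2m/T)\varrho/k\le\tfrac12$ and that $\varrho\cdot T/2$ is a sufficiently large absolute constant; a standard concentration bound then shows that, except with probability at most $0.003$, at least one of the first $T$ occurrences of $j$ succeeds, so a Morris counter for $j$ is born having missed at most $T$ occurrences. The case $m<n$ is handled identically with the $m<n$ branch of the parameters, and a union bound over the failure events of (A), (B), (C) gives overall success probability at least $0.99$.

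The main obstacle is exactly the parameter bookkeeping inside event (A): one must simultaneously show that $j$'s reservoir slot typically survives an entire inter-occurrence gap of $j$ (typical length $m/f_j$) and that only $\cO{\eps f_j/(p\log(nm))}$ occurrences of $j$ elapse before the first conversion of a placement into a held counter, and then check that the $\poly(\log(nm),1/\eps)$ slack deliberately inserted into $\varrho$, $\kappa_1$, $\kappa_2$, and the band width makes both hold for \emph{every} constant $p\ge1$ — this is tightest at $p=1$, where $f_j$ can be nearly as large as $m=\tO{n}$ and the $n^{1-2/p}$ reservoir savings disappear, and in the regime where the stream is long enough that $\varrho<1$. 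The ``slot survives'' estimate is essentially what dictates the specific powers of $\log(nm)$ chosen for $\kappa_1$ and $\kappa_2$.
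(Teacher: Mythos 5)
Your proposal takes essentially the same sample-and-hold route as the paper's proof: argue that the heavy hitter $j$ is sampled within an $\cO{\eps/\log(nm)}$ fraction of its occurrences, invoke \lemref{lem:counters:bound} to protect the resulting Morris counter, and translate Morris-counter accuracy into the stated $\left(1\pm\tfrac{\eps}{8\log(nm)}\right)$ bound on $(\widehat{f_j})^p$. You do fill in two steps the paper states very tersely — the ``hold'' step, i.e.\ that a reservoir slot holding $j$ must survive until $j$'s \emph{next} occurrence before a counter is actually created (the paper only argues that some occurrence in $T$ is sampled into the reservoir), and the retention of $j$'s counter by resets occurring \emph{after} the $\tfrac{\eps^4 F_p}{2^{10}\gamma\log^2(nm)}$-occurrence window of \lemref{lem:counters:bound}, which need not cover all of $f_j$ (e.g.\ at $p=1$ the window is only an $\eps^4$-fraction of $F_p$). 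Both supplements are in the spirit of the paper's parameter choices and appear sound, though the computations you defer with ``I would verify'' — that $(2m/T)\varrho/k\le\tfrac12$, that $\varrho T=\Omega(1)$, and that $F_p\gtrsim\tfrac{k}{2}V^p$ is in fact a contradiction for the chosen $\kappa_1,\kappa_2$ across all $p\ge1$ — are precisely where the polylogarithmic slack in $\varrho,\kappa_1,\kappa_2$ gets spent, and should be carried through rather than asserted.
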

\begin{proof}
Note that for the purposes of analysis, we can assume $m\ge n$, since otherwise if $m<n$, then $\SampleAndHold$ essentially redefines $n$ to be the number of unique items in the induced stream by setting $\varrho$ and $\kappa$ appropriately, even though the overall universe can be larger. 
Note that $m\le2\left(F_p\right)^{1/p}\cdot n^{1-1/p}$. 

By assumption, we have $(f_j)^p\ge\frac{\eps^2\cdot F_p}{2^7\gamma\log^2(nm)}$ and thus, we certainly have $f_j\ge\frac{\eps\cdot\left(F_p\right)^{1/p}}{2^7\gamma\log^2(nm)}$ for $p\ge 2$ and $\eps\in(0,1)$. 
Let $T$ be the set of the first $\frac{\eps}{16\log(nm)}$ fraction of occurrences of $j$, so that 
\[|T|\ge\frac{\eps^2\cdot\left(F_p\right)^{1/p}}{2^{11}\gamma\log^3(nm)}.\]
We claim that with probability at least $\frac{2}{3}$, a Morris counter for $j$ will be created by the stream as it passes through $T$. 
Indeed, observe that since each item of the stream is sampled with probability 
\[\varrho=\frac{\gamma^2n^{1-1/p}\log^4(nm)}{\eps^2 m}\ge\frac{\gamma^2\log^4(nm)}{2\eps^2\left(F_p\right)^{1/p}},\]
then we have with high probability, an index in $T$ is sampled. 
By \lemref{lem:counters:bound}, the index will not be removed by the counters resetting. 

Since $T$ is the set of the first $\frac{\eps}{16\log(nm)}$ fraction of occurrences of $j$, then the Morris counter is used for at least $\left(1-\frac{\eps}{16\log(nm)}\right)f_j$ occurrences of $j$. 
We use Morris counters with multiplicative accuracy $\left(1+\cO{\frac{\eps}{\log(nm)}}\right)$. 
Hence by \thmref{thm:morris}, we obtain an output $\widehat{f_j}$ such that 
\[\left(1-\frac{\eps}{8\log(nm)}\right)\cdot(f_j)^p\le(\widetilde{f_j})^p\le\left(1+\frac{\eps}{8\log(nm)}\right)\cdot(f_j)^p.\]
\end{proof}
Finally, for the purposes of completeness, we remark that an item that is not a heavy-hitter cannot be reported by our algorithm because our algorithm only counts the number of instances of tracked items and thus it always reports underestimates of each item. 
Underestimates of items that are not heavy-hitters will be far from the threshold for heavy-hitters and thus, those items will not be reported. 

\subsection{Full Sample and Hold}
We now address certain shortcomings of \lemref{lem:one:sample:hold} -- namely, the assumption that $F_p=\cO{\frac{n\log^3(nm)}{\eps^{4p}}}$ and the fact that \lemref{lem:one:sample:hold} only provides constant success probability for each heavy hitter $j\in[n]$, but there can be many of these heavy-hitters. 

\begin{algorithm}[!htb]
\caption{$\FullSampleAndHold$}
\alglab{alg:high:sample:hold}
\begin{algorithmic}[1]
\Require{Stream $s_1,\ldots,s_m$ of items from $[n]$, accuracy parameter $\eps\in(0,1)$, $p\ge 1$}
\Ensure{Accurate estimations of $L_p$ heavy hitter frequencies}
\State{$R\gets\cO{\log n}$, $Y=\cO{\log m}$}
\For{$r\in[R], x\in[Y]$}
\State{Let $J^{(r)}_x$ be a (nested) subset of $[m]$ subsampled at rate $p_x:=\min(1, 2^{1-x})$}
\State{Let $m^{(r)}_x$ be the length of the stream $J^{(r)}_x$}
\State{Run $\SampleAndHold^{(r)}_x$ on $J^{(r)}_x$}
\EndFor
\State{Let $\widehat{f_j^{(r,x)}}$ be the estimated frequency for $j$ by $\SampleAndHold^{(r)}_x$}
\State{$\widehat{f_j^{x}}\gets\median_{r\in[R]}\widehat{f_j^{(r,x)}}$}
\State{Let $\ell=\min\{x\in[X]\,\mid\,m_x\ge(\widehat{f_j^{x}})^p\}$}
\State{\Return $\widehat{f_j^{\ell}}$}
\end{algorithmic}
\end{algorithm}
We first show subsampling allows us to find a substream that satisfies the required assumptions. 
We can then boost the probability of success for estimating the frequency of each heavy hitter using a standard median-of-means argument. 
\begin{restatable}{lemma}{lemhighsamplehold}
\lemlab{lem:high:sample:hold}
Let $j\in[n]$ be an item with $(f_j)^p\ge\frac{\eps^2\cdot F_p}{2^{10}\gamma\log^2(nm)}$. 
Then with probability $1-\frac{1}{\poly(n)}$, $\FullSampleAndHold$ outputs $\widehat{f_j}$ such that
\[\left(1-\frac{\eps}{8\log(nm)}\right)\cdot(f_j)^p\le(\widetilde{f_j})^p\le\left(1+\frac{\eps}{8\log(nm)}\right)\cdot(f_j)^p.\]
\end{restatable}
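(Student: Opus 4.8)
The plan is to build the proof from three ingredients: (i) the single-substream guarantee \lemref{one:sample:hold}, which for a heavy item $j$ returns a $\left(1\pm\frac{\eps}{8\log(nm)}\right)$-accurate estimate of $(f_j)^p$ provided the stream it sees has moment $\cO{\frac{n\log^{3}(nm)}{\eps^{4p}}}$ \emph{and} still contains $j$ as an $L_p$-heavy hitter; (ii) a structural claim that among the $Y=\cO{\log m}$ rates $p_x=2^{1-x}$ there is a subsampling level whose substream meets both hypotheses, and that the level-selection step of $\FullSampleAndHold$ locks onto such a level; and (iii) median amplification over the $R=\cO{\log n}$ independent copies of $\SampleAndHold$ to turn the constant per-copy success probability of \lemref{one:sample:hold} into $1-\frac{1}{\poly(n)}$.

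First I would establish that a ``good'' level exists. Subsampling each \emph{stream update} independently at rate $p_x$ shrinks the $p$-th moment at least geometrically: separating coordinates into a light part ($f_i p_x\lesssim 1$, contributing $\approx f_i p_x$ in expectation, which sums to at most $p_x m\le p_x F_p$ since $F_p\ge m$) and a heavy part ($f_i p_x\gtrsim 1$, contributing $\approx(f_i p_x)^p=p_x^p f_i^p\le p_x f_i^p$) yields $\Ex{F_p(J_x)}=\cO{p_x F_p}$, where the hidden constant depends on $p$; since $F_p\le m^p$, the rate halves per level, and the moment drops by only a $2^{\cO{p}}$ factor between consecutive levels, the first level $\ell^{\star}$ at which $\Ex{F_p(J_x)}$ enters the window $\cO{\frac{n\log^{3p}(nm)}{\eps^{4p}}}$ is still $\widetilde{\Omega}(n)$. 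Concentration of $F_p(J_{\ell^{\star}})$ around its mean follows from a Bernstein bound after peeling off the $\cO{\frac{1}{\eps^{p}}\polylog(nm)}$ genuinely large coordinates and treating them by hand, exactly as in \lemref{num:new:counters}. For $j$'s heaviness in $J_{\ell^{\star}}$, I would Chernoff-bound $f_j(J_{\ell^{\star}})\sim\mathrm{Bin}(f_j,p_{\ell^{\star}})$: since $j$ is heavy in the original stream and $F_p(J_{\ell^{\star}})=\widetilde{\Omega}(n)$, the mean $p_{\ell^{\star}}f_j$ is $\gg\log(nm)$, so $f_j(J_{\ell^{\star}})^p=\left(1\pm o\!\left(\frac{1}{\log(nm)}\right)\right)(p_{\ell^{\star}}f_j)^p$ and $j$'s share of $F_p(J_{\ell^{\star}})$ is preserved to within lower-order terms, so the heavy-hitter hypothesis of \lemref{one:sample:hold} holds in $J_{\ell^{\star}}$.

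Next I would show that the rule $\ell=\min\{x\,:\,m_x\ge(\widehat{f_j^x})^p\}$ selects a level on which $\widehat{f_j^\ell}$ (after the $1/p_\ell$ rescaling) is reliable. The engine is the structural fact, noted immediately after \lemref{one:sample:hold}, that $\SampleAndHold$ never overcounts, so its output at every level is an underestimate of the relevant scaling of $f_j$; hence the test $m_x\ge(\widehat{f_j^x})^p$ cannot pass at a level where $j$ still accounts for too large a share of $F_p(J_x)$ --- exactly the regime where \lemref{one:sample:hold} fails --- while the minimality of $\ell$ rules out over-subsampling that would erase $j$. Pinning this down, i.e.\ showing the selected $\ell$ simultaneously has $F_p(J_\ell)=\cO{\frac{n\log^{3p}(nm)}{\eps^{4p}}}$ and keeps $j$ heavy, is the technical heart; given it, \lemref{one:sample:hold} applies to $J_\ell$ and, in a single copy, yields $\left(1\pm\frac{\eps}{8\log(nm)}\right)(f_j)^p$ with probability $0.99$ (the rescaling and the binomial fluctuation of $f_j(J_\ell)$ absorbed into the error by choosing constants). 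To boost this, note the $R=\cO{\log n}$ copies are independent, so the count of successful copies stochastically dominates $\mathrm{Bin}(R,0.99)$; a Chernoff bound makes a strict majority succeed except with probability $\frac{1}{\poly(n)}$, whence $\widehat{f_j^x}=\median_r\widehat{f_j^{(r,x)}}$ is in the target interval for each $x$, and a union bound over the $Y=\cO{\log m}$ levels (plus the $\frac{1}{\poly(n)}$ failure of the subsampling event above) gives the stated $1-\frac{1}{\poly(n)}$.

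I expect the main obstacle to be precisely the coupling in the previous paragraph between the random subsampled moments $F_p(J_x)$ and the random estimates $\widehat{f_j^x}$ that drive the level-selection rule: one needs strong concentration of $F_p(J_x)$ despite heavy coordinates producing large summands, and then a robust argument that the rule cannot be tricked into choosing a level where $F_p(J_\ell)$ is still too large or where $j$ has already vanished. Everything downstream --- Morris accuracy via \thmref{morris}, the inverse-rate rescaling, and median-of-$\cO{\log n}$ amplification --- is routine.
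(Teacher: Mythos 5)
Your overall architecture matches the paper's --- locate a ``good'' subsampling level for $j$, invoke \lemref{lem:one:sample:hold} there, lean on the one-sided underestimate property of $\SampleAndHold$ to make the level-selection rule safe, and median-amplify over $R=\cO{\log n}$ copies. But the technical device you propose for establishing the good level is materially different from the paper's, and you acknowledge that the part you have left open is ``the technical heart.''

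The paper does not pick $\ell^{\star}$ as the first level at which $\Ex{F_p(J_x)}$ enters a target window, nor does it ever need a Bernstein-type concentration for $F_p(J_x)$. Instead, it picks $q$ directly from $f_j$: $q$ is the integer with $f_j/2^q\in[400/\eps^2,800/\eps^2]$. This choice is what makes the moment bound essentially automatic: because $j$ is heavy, every coordinate $y$ has $f_y\le F_p^{1/p}\lesssim f_j\cdot\poly(\log(nm))/\eps^{\cO{1}}$, so $f_y/2^q\lesssim\poly(\log(nm))/\eps^{\cO{1}}$; a Chernoff bound plus a union bound over $y\in[n]$ then gives $f_y(J_q^{(r)})\le\cO{\log^3(nm)/\eps^4}$ for every $y$ simultaneously, whence trivially $F_p(J_q^{(r)})\le n\cdot\cO{\log^{3p}(nm)/\eps^{4p}}$. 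No variance computation for $F_p(J_q)$, no peeling of large coordinates, no Bernstein. Your route via $\Ex{F_p(J_x)}=\cO{p_x F_p}$ and Bernstein concentration is plausible in spirit (your light/heavy split for the expectation is correct since $F_p\ge m$), but the concentration step is genuinely delicate --- the summands $f_i(J_x)^p$ for coordinates with $f_i\approx F_p^{1/p}$ are not sub-exponential on a scale that makes a vanilla Bernstein bound close, and ``treating them by hand exactly as in \lemref{lem:num:new:counters}'' does not actually map onto that lemma, which bounds counter creation, not moment concentration. The paper's coordinate-wise argument sidesteps the whole issue.

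The second gap is the one you flag yourself. The paper resolves the level-selection coupling concretely: at level $q$, Chebyshev gives $A_j=f_j(J_q^{(r)})\in(1\pm\eps)\frac{f_j}{2^q}\le\cO{1/\eps^2}$ with probability $0.9$, and combined with a $(1\pm\eps)$ estimate of $m_q$ this shows the test $m_x\ge(\widehat{f_j^x})^p$ is satisfied at $x=q$, up to the stated constants, so the chosen $\ell$ is at most $q$. Then for every $x<q$ the mean $f_j/2^x$ is only larger, so the binomial concentration of $f_j(J_x)$ is only better, and the never-overestimate property pins $\widehat{f_j^{(\ell)}}$ from above. Your sketch gestures at exactly this structure but does not supply the quantitative handle --- the explicit choice of $q$ via $f_j$ --- that makes both the moment bound and the level-selection argument go through simultaneously. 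In short: same skeleton, but the load-bearing step (bounding $F_p$ of the good substream) is done by a different, harder method in your write-up and is not actually closed; supplying the paper's $f_j$-calibrated level $q$ and the coordinate-wise Chernoff bound would fix it.
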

\begin{proof}
Consider a fixed $j\in[n]$ with $(f_j)^p\ge\frac{\eps^2\cdot F_p}{2^{10}\gamma\log^2(nm)}$. 
Let $q>0$ be the integer such that $\frac{f_j}{2^q}\in\left[\frac{400}{\eps^2},\frac{800}{\eps^2}\right]$. 
Let $A_j$ be the random variable denoting the number of occurrences of $j$ in $J_q$ and note that over the randomness of the sampling, we have $\Ex{A_j}=\frac{f_j}{2^q}\le\frac{400}{\eps^2}$. 
We also have $\Ex{A_j^2}\le\frac{f_j}{2^q}+\frac{(f_j)^2}{2^{2q}}$, so that $\Var{A_j}\le\frac{400}{\eps^2}$. 
By Chebyshev's inequality, we have that the number of occurrences of $j$ in $J_q^{(r)}$ is a $(1+\eps)$-approximation of $\frac{f_j}{2^q}$ with probability at least $0.9$. 
We similarly have that with probability at least $0.9$, $m_x\in\left[\frac{(1-\eps)m}{2^q},\frac{(1+\eps)m}{2^q}\right]$ and thus $(A_j)^p\le\frac{800^p}{\eps^{2p}}\cdot m_q$. 

Since $(f_j)^p\ge\frac{\eps^2\cdot F_p}{2^{10}\gamma\log^2(nm)}$, then by a Chernoff bound, we have that for any $y\in[n]$, the number of occurrences of $y$ in $J_q^{(r)}$ is at most $\frac{\xi\log^3(nm)}{\eps^4}$ for some constant $\xi>1$, with high probability. 
Thus by a union bound, we have that with high probability, 
\[F_p(J_q^{(r)})\le\frac{\xi'n\log^{3p}(nm)}{\eps^{4p}},\]
for some sufficiently large constant $\xi'>1$, which satisfies the assumptions of \lemref{lem:one:sample:hold}. 
Hence we have with probability at least $0.99$, $\widehat{f_j^{(r,q)}}$ is a $(1+\eps)$-approximation to the number of occurrences of $j$ in $J_q^{(r)}$. 
Thus, $\widehat{f_j^{(r,q)}}\le\frac{2\cdot 800^p}{\eps^{2p}}\cdot m_q$ with probability at least $0.7$. 
Observe that since $(f_j)^p\ge\frac{\eps^2\cdot F_p}{2^{10}\gamma\log^2(nm)}$, then for any $y\in[n]$, we have that the expected number of occurrences of $y$ in $J_q^{(r)}$ is at most $\frac{\xi\log^2(nm)}{\eps^4}$, for some sufficiently large constant $\xi>1$. 
Hence by standard Chernoff bounds, we have that the median satisfies \[\widehat{f_j^{(q)}}=\median_{r\in[R]}\widehat{f_j^{(r,q)}}\le\frac{2\cdot 800^p}{\eps^{2p}}\cdot m_q,\]
with high probability. 

Moreover, observe that (1) for any stream with subsampling rate $\frac{1}{2^x}>\frac{1}{2^q}$, we similarly have that the number of occurrences of $j$ in $J_x$ is a $(1+\eps)$-approximation of $\frac{f_j}{2^x}$ with high probability and (2) $\SampleAndHold$ cannot overestimate the frequency of $j$. 
Thus, $\widehat{f_j^{(\ell)}}$ output by $\FullSampleAndHold$ satisfies \[\left(1-\frac{\eps}{8\log(nm)}\right)\cdot(f_j)^p\le(\widetilde{f_j})^p\le\left(1+\frac{\eps}{8\log(nm)}\right)\cdot(f_j)^p\]
with high probability. 
\end{proof}

Putting things together, we obtain \thmref{thm:hh:main}.
\thmhhmain*
\begin{proof}
Correctness of the algorithm for the heavy-hitters follows from \lemref{lem:high:sample:hold}. 
For the items that are not heavy-hitters, observe that by a standard concentration argument, it follows that with high probability across the independent instances, the median of the number of sampled items after rescaling will not surpass the threshold. 

We next analyze the space complexity. 
For $p\in[1,2]$, only $\cO{\frac{\log^{11+3p}(mn)}{\eps^{4+4p}}}$ counters are stored, while for $p>2$, only $\tO{\frac{1}{\eps^{4+4p}} n^{1-2/p}}$ counters are stored. 
Hence, the space complexity follows. 

It remains to analyze the number of internal state changes. 
The internal state can change each time an item is sampled. 
Since each item of the stream is sampled with probability $\varrho=\frac{\gamma^2n^{1-1/p}\log^4(nm)}{\eps^2 m}$, then with high probability, the total number of internal state changes is $\frac{\gamma^2n^{1-1/p}\log^4(nm)}{\eps^2}$. 
\end{proof}

\section{\texorpdfstring{$F_p$ Estimation}{Fp Estimation}}
\seclab{sec:fp}
In this section, we present insertion-only streaming algorithms for $F_p$ estimation with a small number of internal state changes. 
We first observe that an $F_p$ estimation algorithm by \cite{JayaramW19} achieves a small number of internal state changes for $p<1$. 
We then build upon our $L_p$-heavy hitter algorithm to achieve an $F_p$ estimation algorithm that achieves a small number of internal state changes for $p>1$.

\subsection{\texorpdfstring{$F_p$ Estimation, $p<1$}{Fp Estimation, p<1}}
As a warm-up, we first show that the $F_p$ estimation streaming algorithm of \cite{JayaramW19} uses a small number of internal state changes for $p<1$. 
We first recall the following definition of the $p$-stable distribution:
\begin{definition}[$p$-stable distribution]
\cite{Zolotarev89}
\deflab{def:pstable}
For $0<p\le 2$, the $p$-stable distribution $\calD_p$ exists and satisfies $\sum_{i=1}^n Z_ix_i\sim\|x\|_p\cdot Z$ for $Z,Z_1,\ldots,Z_n\sim\calD_p$ and any vector $x\in\mathbb{R}^n$. 
\end{definition}
A standard method~\cite{Nolan03} for generating $p$-stable random variables is to first generate $\theta\sim\Uni\left(\left[-\frac{\pi}{2},\frac{\pi}{2}\right]\right)$ and $r\sim\Uni([0,1])$ and then set
\[X=f(r,\theta)=\frac{\sin(p\theta)}{\cos^{1/p}(\theta)}\cdot\left(\frac{\cos(\theta(1-p))}{\log\frac{1}{r}}\right)^{\frac{1}{p}-1}.\]

The $F_p$ estimation streaming algorithm of \cite{JayaramW19} first generates a sketch matrix $D\in\mathbb{R}^{k\times n}$, where $k=\cO{\frac{1}{\eps^2}}$ and each entry of $D$ is generated from the $p$-stable distribution. 
Observe that $D$ can be viewed as $k$ vectors $D^{(1)},\ldots,D^{(k)}\in\mathbb{R}^n$ of $p$-stable random variables. 
For $i\in[k]$, suppose we maintained $\langle D^{(i)}, x\rangle$, where $x$ is the frequency vector induced by the stream. 
Then it is known \cite{Indyk06} that with constant probability, the median of these inner products is a $(1+\eps)$-approximation to $F_p$. 

\cite{JayaramW19} notes that each vector $D^{(i)}$ can be further decomposed into a vector $D^{(i,+)}$ containing the positive entries of $D^{(i)}$ and a vector $D^{(i,-)}$ containing the negative entries of $D^{(i)}$. 
Since $D^{(i)}=D^{(i,+)}+D^{(i,-)}$, then it suffices to maintain $\langle D^{(i,+)}, x\rangle$ and $\langle D^{(i,-)}, x\rangle$ for each $i\in[k]$. 
For insertion-only streams, all entries of $x$ are non-negative, and so the inner products $\langle D^{(i,+)}, x\rangle$ and $\langle D^{(i,-)}, x\rangle$ are both monotonic over the course of the stream, which permits the application of Morris counters. 
Thus the algorithm of \cite{JayaramW19} instead uses Morris counters to approximately compute $\langle D^{(i,+)}, x\rangle$ and $\langle D^{(i,-)}, x\rangle$ to within a $(1+\cO{\eps})$-multiplicative factor. 
The key technical point is that \cite{JayaramW19} shows that
\[\left\lvert\langle D^{(i,-)}, x\rangle\right\rvert+\left\lvert\langle D^{(i,-)}, x\rangle\right\rvert=\cO{\|x\|_p}\]
for $p<1$ and so $(1+\cO{\eps})$-multiplicative factor approximations to $\langle D^{(i,+)}, x\rangle$ and $\langle D^{(i,-)}, x\rangle$ are enough to achieve a $(1+\eps)$-approximation to $\langle D^{(i)}, x\rangle$. 
Now the main gain is that using Morris counters to approximate $\langle D^{(i,+)}, x\rangle$ and $\langle D^{(i,-)}, x\rangle$, not only is the overall space usage improved for the purposes of \cite{JayaramW19}, but also for our purposes, the number of internal state updates is much smaller. 

As an additional technical caveat, \cite{JayaramW19} notes that the sketching matrix $D$ cannot be stored in the allotted memory. 
Instead, \cite{JayaramW19} notes that by using the log-cosine estimator~\cite{KaneNW10} instead of the median estimator, the entries of $D$ can be generated using $\cO{\frac{\log(1/\eps)}{\log\log(1/\eps)}}$-wise independence, so that storing the randomness used to generate $D$ only requires $\cO{\frac{\log(1/\eps)}{\log\log(1/\eps)}\log n}$ bits of space. 

\begin{theorem}
\thmlab{thm:small:p}
For $p\in(0,1]$, there exists a one-pass insertion-only streaming algorithm that uses $\poly\left(\log n,\frac{1}{\eps},\log\frac{1}{\delta}\right)$ internal state changes, $\cO{\frac{1}{\eps^2}\left(\log\log n+\log\frac{1}{\eps}\right)+\frac{\log(1/\eps)}{\log\log(1/\eps)}\log n}$ bits of space, and outputs a $(1+\eps)$-approximation to $F_p$ with high probability. 
\end{theorem}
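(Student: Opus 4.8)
The plan is to observe that the algorithm of \cite{JayaramW19} described in the preceding discussion already satisfies all three claimed bounds, so the proof amounts to an accounting argument on top of facts imported from \cite{JayaramW19,KaneNW10} and \thmref{thm:morris}. First I would recall the structure of the estimator: it maintains, for each of $k=\cO{1/\eps^2}$ rows of the $p$-stable sketch matrix $D$, the two quantities $\langle D^{(i,+)},x\rangle$ and $\langle D^{(i,-)},x\rangle$, each of which is monotone non-decreasing in magnitude on an insertion-only stream and is therefore approximated by a Morris counter to multiplicative accuracy $1+\cO{\eps}$; the $k$ approximate inner products are then combined via the log-cosine estimator of \cite{KaneNW10} rather than the median, so that $D$ only needs to be $\cO{\log(1/\eps)/\log\log(1/\eps)}$-wise independent and its seed fits in $\cO{\frac{\log(1/\eps)}{\log\log(1/\eps)}\log n}$ bits.

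Next I would establish correctness. By the $p<1$ bound $\left\lvert\langle D^{(i,+)},x\rangle\right\rvert+\left\lvert\langle D^{(i,-)},x\rangle\right\rvert=\cO{\|x\|_p}$ proved in \cite{JayaramW19}, a $(1+\cO{\eps})$ relative error in each of the two halves contributes only an additive $\cO{\eps\|x\|_p}$ error to $\langle D^{(i)},x\rangle=\langle D^{(i,+)},x\rangle+\langle D^{(i,-)},x\rangle$, i.e.\ a $(1+\cO{\eps})$ relative error per coordinate of the sketch; the log-cosine estimator of \cite{KaneNW10} is robust to such per-coordinate perturbations and, applied to these approximated coordinates, yields a $(1+\eps)$-approximation to $F_p$ with constant probability, which is amplified to high probability in the standard way. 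I would invoke \thmref{thm:morris} with per-counter failure probability $\delta'=\delta/\poly(1/\eps)$ and union bound over the $\cO{1/\eps^2}$ counters (and over the evaluation of the log-cosine estimator) so that all Morris counters are simultaneously accurate except with probability $\delta$.

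For the resource bounds: by \thmref{thm:morris} each of the $\cO{1/\eps^2}$ Morris counters uses $\cO{\log\log n+\log(1/\eps)+\log\log(1/\delta)}$ bits, giving $\cO{\frac{1}{\eps^2}\left(\log\log n+\log\frac{1}{\eps}\right)}$ bits for the counters, and adding the $\cO{\frac{\log(1/\eps)}{\log\log(1/\eps)}\log n}$ bits of seed gives the stated space. Again by \thmref{thm:morris}, each Morris counter changes state at most $\poly(\log n,1/\eps,\log(1/\delta))$ times over the stream, and since the only components of the data structure that ever change after initialization are these $\cO{1/\eps^2}$ counters, summing over them bounds the total number of internal state changes by $\poly(\log n,1/\eps,\log(1/\delta))$.

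Rather than a genuine obstacle, the one point requiring care is that the Morris counters are randomized and their multiplicative guarantee holds only with high probability, so I must choose their internal failure probability small enough to survive the union bound over all $\cO{1/\eps^2}$ of them while verifying that the resulting $\log\log(1/\delta)$ and $\log(1/\eps)$ overheads are already absorbed into the space bound as stated; the remaining correctness and independence claims are inherited verbatim from \cite{JayaramW19,KaneNW10}.
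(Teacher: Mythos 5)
Your proposal matches the paper's argument essentially verbatim: the paper does not give a formal proof of \thmref{thm:small:p} but instead derives it from exactly the discussion you reproduce — the \cite{JayaramW19} decomposition of each $p$-stable sketch row into positive and negative parts, the monotonicity of $\langle D^{(i,\pm)},x\rangle$ enabling Morris counters, the $\left\lvert\langle D^{(i,+)},x\rangle\right\rvert+\left\lvert\langle D^{(i,-)},x\rangle\right\rvert=\cO{\|x\|_p}$ bound for $p<1$, the log-cosine estimator of \cite{KaneNW10} with $\cO{\log(1/\eps)/\log\log(1/\eps)}$-wise independent seeds, and the accounting of state changes via \thmref{thm:morris}. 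Your additional care about choosing the Morris counters' failure probability to survive the union bound is a reasonable and correct detail that the paper leaves implicit.
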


\subsection{\texorpdfstring{$F_p$ Estimation, $p>1$}{Fp Estimation, p>1}}
In this section, we present our $F_p$ approximation algorithm for insertion-only streams that only have $\widetilde{\Omega}_\eps(n^{1-1/p})$ internal state updates for $p>1$.  

We first define the level sets of the $F_p$ moment, as well as the contribution of each level set. 
\begin{definition}[Level sets and contribution]
Let $\widetilde{F_p}$ be the power of two such that $F_p\le\widetilde{F_p}<2F_p$. 
Given a uniformly random $\lambda\in\left[\frac{1}{2},1\right]$, for each $\ell\in[L]$, we define the level set 
\[\Gamma_\ell:=\left\{i\in[n]\,\mid\,f_i\in\left[\frac{\lambda\cdot\widetilde{F_p}}{2^\ell},\frac{2\lambda\widetilde{F_p}}{2^\ell}\right)\right\}.\]
We define the contribution $C_\ell$ and the fractional contribution $\phi_\ell$ of level set $\Gamma_\ell$ to be $C_\ell:=\sum_{i\in\Gamma_\ell}(f_i)^p$ and $\phi_\ell:=\frac{C_\ell}{F_p}$. 

For an accuracy parameter $\eps$ and a stream of length $m$, we say that a level set $\Gamma_\ell$ is significant if its fractional contribution $\phi_\ell$ is at least $\frac{\eps}{2p\log(nm)}$. 
Otherwise, we say the level set is insignificant.
\end{definition}

Our algorithm follows from the framework introduced by \cite{IndykW05} and subsequently used in a number of different applications, e.g.,~\cite{WoodruffZ12,BlasiokBCKY17,LevinSW18,BravermanWZ21,WoodruffZ21,MahabadiWZ22,BravermanMWZ23} and has the following intuition. 
We estimate the contributions of the significant level sets by approximating the frequencies of the heavy hitters for substreams induced by subsampling the universe at exponentially smaller rates. 
Specifically, we create $L=\cO{\log n}$ substreams where for each $\ell\in[L]$, we subsample each element of the universe $[n]$ into the substream with probability $\frac{1}{2^{\ell-1}}$. 
We rescale $(1+\eps)$-approximations to the contributions of the surviving heavy hitters by the inverse of the sampling rate to obtain good approximations of the contributions of each significant level set. 

To guarantee a small number of internal state changes, we use our heavy hitter algorithm $\FullSampleAndHold$ to provide $(1+\eps)$-approximations to the heavy hitters in each substream, thereby obtaining good approximations to the contributions of each significant level set.  
Our algorithm appears in full in \algref{alg:fp:est}. 

\begin{algorithm}[!htb]
\caption{$F_p$ approximation algorithm, $p\ge 1$}
\alglab{alg:fp:est}
\begin{algorithmic}[1]
\Require{Stream $s_1,\ldots,s_m$ of items from $[n]$, accuracy parameter $\eps\in(0,1)$, $p\ge 1$}
\Ensure{$(1+\eps)$-approximation to $F_p$}
\State{$L=\cO{p\log(nm)}$, $R=\cO{\log\log n}$, $\gamma=2^{20}p$}
\For{$t=1$ to $t=m$}
\For{$(\ell,r)\in[L]\times[R]$}
\State{Let $m_\ell^{(r)}$ be a $2$-approximation to the length of the induced stream}
\Comment{Morris counter}
\State{Let $I_\ell^{(r)}$ be a (nested) subset of $[n]$ subsampled at rate $p_\ell:=\min(1, 2^{1-\ell})$}
\If{$s_t\in I_\ell^{(r)}$}
\State{Send $s_t$ to $\FullSampleAndHold_\ell^{(r)}$}
\EndIf
\EndFor
\EndFor
\State{Let $H_i^{(r)}$ be the outputs of the Morris counters at level $i$}
\State{Let $\widetilde{M}$ be the power of two such that $m^p\le\widetilde{M}<2m^p$}
\State{Let $S_i^{(r)}$ be the set of ordered pairs $(j,\widehat{f_j})$ of $H_i^{(r)}$ with $\left(\widehat{f_j}\right)^p\in\left[\frac{\lambda\cdot\widetilde{M}}{2^i},\frac{2\lambda\cdot\widetilde{M}}{2^i}\right]$}
\For{$i=1$ to $i=L$}
\State{$\ell\gets\max\left(1,i-\flr{\log\frac{\gamma^2\log(nm)}{\eps^2}}\right)$}
\State{$\widehat{C_i}\gets\frac{1}{p_\ell}\median_{r\in[R]}\left(\sum_{(j,\widehat{f_j})\in S_\ell^{(r)}}\left(\widehat{f_j}\right)^p\right)$}
\EndFor
\State{\Return $\widehat{F_p}=\sum_{\ell\in[L]}\widehat{C_\ell}$}
\end{algorithmic}
\end{algorithm}

We note the following corollary of \lemref{lem:high:sample:hold}. 
\begin{restatable}{lemma}{lemhh}
\lemlab{lem:hh}
Let $r\in[R]$ be fixed. 
Suppose $j\in I_\ell^{(r)}$ and $(f_j)^p\ge\frac{\eps^2\cdot F_p(I_\ell^{(r)})}{2^7\gamma\log^2(nm)}$. 
Then with probability at least $\frac{9}{10}$, $H_\ell^{(r)}$ outputs $\widehat{f_j}$ with
\[\left(1-\frac{\eps}{8\log(nm)}\right)\cdot(f_j)^p\le(\widetilde{f_j})^p\le\left(1+\frac{\eps}{8\log(nm)}\right)\cdot(f_j)^p.\]
\end{restatable}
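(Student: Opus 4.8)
The plan is to derive \lemref{lem:hh} as an immediate corollary of \lemref{lem:high:sample:hold}, by recognizing that $\FullSampleAndHold_\ell^{(r)}$, as invoked in \algref{alg:fp:est}, is nothing more than a run of $\FullSampleAndHold$ on the substream of updates whose coordinate survives the universe subsampling $I_\ell^{(r)}$. First I would condition on the realization of $I_\ell^{(r)}$; this fixes that substream, whose induced frequency vector $g$ satisfies $g_i = f_i$ for $i \in I_\ell^{(r)}$ and $g_i = 0$ otherwise, so that $F_p(g) = \sum_{i \in I_\ell^{(r)}}(f_i)^p = F_p(I_\ell^{(r)})$. Because $j \in I_\ell^{(r)}$, every update to $j$ survives, so the frequency of $j$ in the substream is exactly $g_j = f_j$; moreover the substream's universe and length are at most $n$ and $m$, so the $\log(nm)$ factors appearing in \lemref{lem:high:sample:hold} are no larger in this context.

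Next I would check that the hypothesis of \lemref{lem:high:sample:hold} holds for $j$ relative to this substream. That lemma requires $(g_j)^p \ge \frac{\eps^2 \cdot F_p(g)}{2^{10}\gamma\log^2(nm)}$; since $g_j = f_j$, $F_p(g) = F_p(I_\ell^{(r)})$, and $\frac{1}{2^7} \ge \frac{1}{2^{10}}$, the assumption $(f_j)^p \ge \frac{\eps^2 \cdot F_p(I_\ell^{(r)})}{2^7\gamma\log^2(nm)}$ of \lemref{lem:hh} is at least as strong as what is needed, so the hypothesis is satisfied. Crucially, \lemref{lem:high:sample:hold} imposes no upper bound on the moment of its input stream, so it applies to the substream regardless of how large $F_p(I_\ell^{(r)})$ is. Applying it then yields, with probability $1 - \frac{1}{\poly(n)} \ge \frac{9}{10}$ over the internal randomness of $\FullSampleAndHold_\ell^{(r)}$, an estimate $\widehat{f_j}$ (the value $H_\ell^{(r)}$ records for $j$) with
\[\left(1-\frac{\eps}{8\log(nm)}\right)\cdot(f_j)^p\le(\widetilde{f_j})^p\le\left(1+\frac{\eps}{8\log(nm)}\right)\cdot(f_j)^p.\]
Since this holds for every fixing of $I_\ell^{(r)}$, it holds after removing the conditioning, which is the claim.

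I do not expect any real obstacle here: the entire content is the observation that running $\FullSampleAndHold$ on a universe-subsampled substream is literally an instance of $\FullSampleAndHold$ on the induced frequency vector, with $F_p$ replaced by $F_p(I_\ell^{(r)})$. The only point meriting a sentence of care is that the internal update-subsampling and counter-maintenance of $\SampleAndHold$ are oblivious to whether the input stream was itself produced by subsampling a larger universe, so the guarantees of \lemref{lem:high:sample:hold} transfer verbatim; the $\frac{9}{10}$ in the statement is merely a conservative weakening of the $1 - 1/\poly(n)$ bound, convenient for the later union bound over the $L \times R$ substreams and the median-of-means step.
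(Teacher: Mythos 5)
Your proposal is correct and takes essentially the same approach as the paper, whose own proof is a single sentence stating that the result follows from \lemref{lem:high:sample:hold} together with the observation that $\FullSampleAndHold$ is run on the substream induced by $I_\ell^{(r)}$. You have simply filled in the details that the paper leaves implicit: conditioning on $I_\ell^{(r)}$, noting $g_j = f_j$ and $F_p(g) = F_p(I_\ell^{(r)})$, verifying that the $2^7$ threshold implies the $2^{10}$ threshold, and weakening $1 - 1/\poly(n)$ to $9/10$.
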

\begin{proof}
The proof follows from \lemref{lem:high:sample:hold} and the fact that $\FullSampleAndHold$ is only run on the substream induced by $I_\ell^{(r)}$. 
\end{proof}

\begin{restatable}{lemma}{lemfreqacc}
\lemlab{lem:freq:acc}
Let $\eps\in(0,1)$, $\Gamma_i$ be a fixed level set and let $\ell:=\max\left(1,i-\flr{\log\frac{\gamma\log(nm)}{\eps^2}}\right)$. 
For a fixed $r\in[R]$, let $\calE_1$ be the event that $|I_\ell^{(r)}|\le\frac{32n}{2^\ell}$ and let $\calE_2$ be the event that $F_p(I_\ell^{(r)})\le\frac{32F_p}{2^\ell}$. 
Then conditioned on $\calE_1$ and $\calE_2$, for each $j\in\Gamma_i\cap I_\ell^{(r)}$, there exists $(j,\widetilde{f_j})$ in $S_i^{(r)}$ such that with probability at least $\frac{9}{10}$,
\[\left(1-\frac{\eps}{8\log(nm)}\right)\cdot(f_j)^p\le(\widetilde{f_j})^p\le\left(1+\frac{\eps}{8\log(nm)}\right)\cdot(f_j)^p.\]
\end{restatable}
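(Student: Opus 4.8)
The plan is to derive this lemma as essentially a single application of \lemref{lem:hh}, after checking that every coordinate of $\Gamma_i$ surviving into the subsampled stream is heavy enough within that substream, and then confirming that the resulting accurate estimate of $(f_j)^p$ actually places the pair $(j,\widehat{f_j})$ into the set $S_i^{(r)}$. Concretely: (i) show that conditioned on $\calE_1$ and $\calE_2$ one has $(f_j)^p \ge \frac{\eps^2\cdot F_p(I_\ell^{(r)})}{2^7\gamma\log^2(nm)}$; (ii) invoke \lemref{lem:hh}; (iii) do the bucket bookkeeping.

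For step (i), membership $j\in\Gamma_i$ gives $f_j \ge \frac{\lambda\widetilde{F_p}}{2^i} \ge \frac{F_p}{2^{i+1}}$, hence $(f_j)^p \ge \frac{F_p^p}{2^{(i+1)p}}$; event $\calE_2$ gives $F_p(I_\ell^{(r)}) \le \frac{32F_p}{2^\ell}$; and the definition $\ell = \max\bigl(1,\,i-\flr{\log\frac{\gamma\log(nm)}{\eps^2}}\bigr)$ forces $2^\ell \ge 2^i\cdot\frac{\eps^2}{2\gamma\log(nm)}$, so that $F_p(I_\ell^{(r)}) = \cO{\frac{F_p\gamma\log(nm)}{2^i\eps^2}}$. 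Substituting, and using that $p$ is constant, that $\log(nm)$ is large, and that $2^i=\cO{F_p}$ (which holds because $f_j\ge 1$ and $f_j < \frac{4F_p}{2^i}$), collapses the chain to the claimed heavy-hitter inequality; this is precisely where the offset $\flr{\log\frac{\gamma\log(nm)}{\eps^2}}$ in the definition of $\ell$ is calibrated. Event $\calE_1$ enters only to guarantee that the effective universe of the substream $I_\ell^{(r)}$ handed to $\FullSampleAndHold_\ell^{(r)}$ has size $\cO{n/2^\ell}$, so that the parameter choices behind \lemref{lem:high:sample:hold} (and therefore \lemref{lem:hh}) are the intended ones.

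Given this, step (ii) is immediate: \lemref{lem:hh} applied to $j\in I_\ell^{(r)}$ with the heaviness bound just established yields, with probability at least $\frac{9}{10}$, an estimate $\widehat{f_j}$ from $H_\ell^{(r)}$ satisfying $\bigl(1-\frac{\eps}{8\log(nm)}\bigr)(f_j)^p \le (\widetilde{f_j})^p \le \bigl(1+\frac{\eps}{8\log(nm)}\bigr)(f_j)^p$, which is the desired bound. For step (iii), $j\in\Gamma_i$ means $(f_j)^p$ lies in the $i$-th dyadic band, and since $\widehat{f_j}$ perturbs $f_j$ by a multiplicative factor of only $\bigl(1\pm\frac{\eps}{8\log(nm)}\bigr)^{1/p}\ll 2$, the value $(\widehat{f_j})^p$ stays in the same band; the $\lambda$-randomization in the level-set definition ensures that, outside a small-probability event, $(f_j)^p$ is bounded away from the band endpoints by more than this estimation error, so no surviving coordinate is pushed across a boundary. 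Matching the band cut out by $\widetilde{M}$ to the level-set band cut out by $\widetilde{F_p}$ (they differ by a fixed power of two, absorbed by the indexing of the $S$'s versus the $\Gamma$'s) then places $(j,\widehat{f_j})$ into $S_i^{(r)}$.

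The main obstacle is step (i): chaining the inequalities among $f_j$, $\widetilde{F_p}$, $F_p(I_\ell^{(r)})$, $2^i$, and $2^\ell$ so that the hypothesis of \lemref{lem:hh} is met for \emph{every} surviving coordinate of $\Gamma_i$ — not merely coordinates of significant level sets — while keeping the constant $2^7$ and the powers of $\log(nm)$ aligned and using only that $p$ is constant and $nm$ large. The bucket-membership argument in step (iii) is routine, but it does rely on the $\lambda$-randomization to rule out coordinates sitting essentially on a band boundary.
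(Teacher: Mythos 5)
Your proof takes essentially the same route as the paper's: verify that every $j\in\Gamma_i\cap I_\ell^{(r)}$ is heavy within the subsampled stream $I_\ell^{(r)}$ (using $\calE_2$ and the definition of the offset in $\ell$), then hand off to \lemref{lem:hh}. The paper organizes step~(i) as a two-way case split on whether $\ell=1$ or $\ell=i-\flr{\cdot}>1$; your unified bound $2^\ell\ge 2^i\cdot\frac{\eps^2}{2\gamma\log(nm)}$ subsumes both cases and is cleaner. One small remark: you read the level-set definition literally as $f_j\in\bigl[\frac{\lambda\widetilde{F_p}}{2^i},\frac{2\lambda\widetilde{F_p}}{2^i}\bigr)$ and therefore raise to the $p$-th power, obtaining $(f_j)^p\ge F_p^p/2^{(i+1)p}$ and needing the extra observation $2^i=\cO{F_p}$ to close the chain; the paper's own proof of this lemma (and the definition of $S_i^{(r)}$, which tests $(\widehat{f_j})^p$) make clear the intended reading is $(f_j)^p\in\bigl[\frac{\lambda\widetilde{F_p}}{2^i},\frac{2\lambda\widetilde{F_p}}{2^i}\bigr)$, under which step~(i) is immediate: $(f_j)^p\ge\frac{F_p}{2\cdot 2^i}$ beats $\frac{\eps^2 F_p(I_\ell^{(r)})}{2^7\gamma\log^2(nm)}=\cO{\frac{F_p}{2^i\log(nm)}}$ by a $\log(nm)$ factor with no further assumptions. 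Your version is not wrong, just needlessly tight. Finally, your step~(iii) on $\lambda$-randomized bucket membership is a reasonable thing to want given the literal lemma statement, but the paper defers that bookkeeping entirely to the ``randomized boundaries'' paragraph of \lemref{lem:correctness}; the paper's proof of this lemma stops at applying \lemref{lem:hh}.
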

\begin{proof}
We consider casework on whether $i-\flr{\log\frac{\gamma\log^2(nm)}{\eps^2}}\le 1$ or $i-\flr{\log\frac{\gamma\log^2(nm)}{\eps^2}}>1$. 
This corresponds to whether the frequencies $\left(\widehat{f_j}\right)^p$ in a significant level set are large or not large, informally speaking. 
If the frequencies are large, then it suffices to estimate them using our sampling-based algorithm. 
However, if the frequencies are not large, then subsampling must first be performed before we can estimate the frequencies using our sampling-based algorithm.

Suppose $i-\flr{\log\frac{\gamma\log^2(nm)}{\eps^2}}\le 1$, so that $\frac{1}{2^i}\ge\frac{\eps^2}{\gamma\log^2(nm)}$. 
Since $j\in\Gamma_i$, we have $(f_j)^p\in\left[\frac{\lambda\cdot\widetilde{F_p}}{2^i},\frac{2\lambda\widetilde{F_p}}{2^i}\right]$ and thus
\[(f_j)^p\ge\frac{\eps^2\cdot\widetilde{F_p}}{\gamma\log^2(nm)},\qquad f_j\ge\frac{\eps^{2/p}\cdot\left(\widetilde{F_p}\right)^{2/p}}{\gamma^{1/p}\log^{1/p}(nm)}\]
Moreover, for $\ell=\max\left(1,i-\flr{\log\frac{\gamma\log^2(nm)}{\eps^2}}\right)$, we have $\ell=1$, so we consider the outputs by the Morris counters $H_1^{(r)}$. 
By \lemref{lem:hh}, we have that with probability at least $\frac{9}{10}$, $H_\ell^{(r)}$ outputs $\widehat{f_j}$ such that
\[\left(1-\frac{\eps}{8\log(nm)}\right)\cdot(f_j)^p\le(\widetilde{f_j})^p\le\left(1+\frac{\eps}{8\log(nm)}\right)\cdot(f_j)^p,\]
as desired. 

For the other case, suppose $i-\flr{\log\frac{\gamma\log^2(nm)}{\eps^2}}>1$, so that $\ell=i-\flr{\log\frac{\gamma\log^2(nm)}{\eps^2}}$ and $p_\ell=2^{1-\ell}$. 
Therefore, 
\[\frac{1}{2^\ell}=\frac{\gamma\log^2(nm)}{\eps^2}\frac{1}{2^i}.\]
Since $j\in\Gamma_i$, we have again $(f_j)^p\in\left[\frac{\lambda\cdot\widetilde{F_p}}{2^i},\frac{2\lambda\widetilde{F_p}}{2^i}\right]$ and therefore,
\[(f_j)^p\ge\frac{F_p}{4\cdot2^i}\ge\frac{\eps^2}{4\gamma\log^2(nm)}\frac{F_p}{2^\ell}.\]
Conditioning on the event $\calE_2$, we have $F_p(I_\ell^{(r)})\le\frac{32F_p}{2^\ell}$ and thus
\[(f_j)^p\ge\frac{F_p}{4\cdot2^i}\ge\frac{\eps^2}{128\gamma\log^2 n}\cdot F_p(I_\ell^{(r)}).\]
Therefore by \lemref{lem:hh}, we have that with probability at least $\frac{9}{10}$, $H_\ell^{(r)}$ outputs $\widehat{f_j}$ such that
\[\left(1-\frac{\eps}{8\log(nm)}\right)\cdot(f_j)^p\le(\widetilde{f_j})^p\le\left(1+\frac{\eps}{8\log(nm)}\right)\cdot(f_j)^p,\]
as desired. 
\end{proof}

We now justify the approximation guarantees of our algorithm. 
\begin{restatable}{lemma}{lemcorrectness}
\lemlab{lem:correctness}
$\PPr{\left\lvert\widehat{F_p}-F_p\right\rvert\le\eps\cdot F_p}\ge\frac{2}{3}$.
\end{restatable}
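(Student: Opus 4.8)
The plan is to follow the level-set / universe-subsampling framework of \cite{IndykW05}. First I would write $F_p = \sum_{\ell \in [L]} C_\ell$ using the partition of $[n]$ into the level sets $\Gamma_\ell$, and split the sum according to whether $\Gamma_\ell$ is significant ($\phi_\ell \ge \frac{\eps}{2p\log(nm)}$) or not. Since there are only $L = \cO{\log(nm)}$ level sets in total, the insignificant ones contribute at most $L \cdot \frac{\eps}{2p\log(nm)} \cdot F_p = \cO{\eps}\cdot F_p$; moreover, because $\FullSampleAndHold$ (hence every $H_\ell^{(r)}$) never overestimates a frequency and, by \lemref{lem:freq:acc}, never grossly underestimates a heavy hitter it does report, the estimates $\widehat{C_i}$ that the algorithm attributes to insignificant level sets cannot exceed, up to a $\left(1\pm\frac{\eps}{8\log(nm)}\right)$ factor plus the borderline mass discussed below, the true $p$-th-power mass genuinely sitting in the corresponding frequency windows — so they too contribute only $\cO{\eps}\cdot F_p$ of error. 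It therefore remains to estimate each significant $C_i$ so that the total error over the $\cO{\frac{1}{\eps}\log(nm)}$ significant level sets is $\cO{\eps}\cdot F_p$, after which rescaling $\eps$ by a constant finishes the proof.

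For a fixed significant level set $\Gamma_i$, I would work at the subsampling level $\ell := \max\left(1, i - \flr{\log\frac{\gamma^2\log(nm)}{\eps^2}}\right)$ chosen by the algorithm, calibrated so that in expectation $p_\ell \cdot |\Gamma_i| = \Theta\!\left(\frac{\gamma^2}{\eps}\right)$ coordinates of $\Gamma_i$ survive into $I_\ell^{(r)}$ (using $|\Gamma_i| \ge \phi_i\cdot 2^i/4$ and $p_\ell = \Theta\!\left(\frac{\gamma^2\log(nm)}{\eps^2}\right)\cdot 2^{-i}$). Conditioning on the events $\calE_1$ and $\calE_2$ of \lemref{lem:freq:acc} — each holding with probability close to $1$ by Markov's inequality applied to $|I_\ell^{(r)}|$ and to $F_p(I_\ell^{(r)})$ — \lemref{lem:freq:acc} guarantees that every surviving $j \in \Gamma_i \cap I_\ell^{(r)}$ is recorded in $S_i^{(r)}$ with an estimate obeying $\left(1-\frac{\eps}{8\log(nm)}\right)(f_j)^p \le (\widehat{f_j})^p \le \left(1+\frac{\eps}{8\log(nm)}\right)(f_j)^p$; I would boost this per-repetition guarantee by the median over the $R = \cO{\log\log n}$ repetitions so that it holds simultaneously for all surviving coordinates of $\Gamma_i$ with probability $1 - \frac{1}{\polylog(n)}$, while also discarding the $\cO{1}$-probability repetitions in which $\calE_1$, $\calE_2$, or the subroutine fails. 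Thus $\widehat{C_i} = \frac{1}{p_\ell}\sum_{(j,\widehat{f_j}) \in S_i^{(r)}} (\widehat{f_j})^p$ equals, up to the $\left(1\pm\frac{\eps}{8\log(nm)}\right)$ counter error and the boundary error below, the inverse-probability estimator $\frac{1}{p_\ell}\sum_{j \in \Gamma_i \cap I_\ell^{(r)}} (f_j)^p$ of $C_i$. Each surviving summand is at most $\frac{2\widetilde{F_p}}{2^i}$, so $\Var{\widehat{C_i}} \le \cO{\frac{1}{p_\ell}\cdot\frac{F_p}{2^i}\cdot C_i} = \cO{\frac{\eps^2}{\gamma^2\log(nm)}\cdot F_p\cdot C_i}$; crucially, distinct significant level sets involve disjoint coordinate sets (and, outside the $\cO{\log\log(nm)+\log\frac1\eps}$ top levels mapped to $\ell=1$, distinct $\FullSampleAndHold$ instances), so the $\widehat{C_i}$ are independent, giving $\Var{\sum_i\widehat{C_i}} \le \sum_i\Var{\widehat{C_i}} \le \cO{\frac{\eps^2}{\gamma^2\log(nm)}\cdot F_p^2} \ll \eps^2 F_p^2$ since $\gamma = 2^{20}p$, whence a Chebyshev bound on $\sum_i\widehat{C_i}$ gives aggregate subsampling fluctuation $\cO{\eps}\cdot F_p$ with probability at least, say, $0.9$.

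The step I expect to be the main obstacle is controlling cross-contamination between level sets. The set $S_i^{(r)}$ is defined by the hard threshold $(\widehat{f_j})^p \in \left[\frac{\lambda\widetilde{M}}{2^i}, \frac{2\lambda\widetilde{M}}{2^i}\right]$, whereas the $\widehat{f_j}$ are only $\left(1\pm\frac{\eps}{8\log(nm)}\right)$-approximate, so a coordinate whose true $p$-th power lies near a level-set boundary may be dropped from its own level set's estimate or misattributed to a neighbor. Here I would exploit the random offset $\lambda \sim \Uni\!\left(\left[\frac{1}{2}, 1\right]\right)$: for a fixed coordinate $j$, the probability over $\lambda$ that $(f_j)^p$ falls within a multiplicative $\left(1 \pm \frac{\eps}{4\log(nm)}\right)$ window of either endpoint of its level set is $\cO{\frac{\eps}{\log(nm)}}$, so the expected (over $\lambda$) total $p$-th-power mass of these borderline coordinates is $\cO{\frac{\eps}{\log(nm)}}\cdot F_p$, and by Markov it is $\cO{\eps}\cdot F_p$ with probability at least, say, $0.95$. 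Since only borderline coordinates can be present in or absent from an $S_i^{(r)}$ in a way that differs from the exact-count case, this mass is absorbed into the $\cO{\eps}\cdot F_p$ error budget for both the significant and the insignificant level sets. Finally I would take a union bound over the significant level sets together with the $\cO{1}$ global events (the $\calE_1$/$\calE_2$-type concentration failures, the Chebyshev event for $\sum_i\widehat{C_i}$, and the borderline-mass event), with all constants chosen so that the total failure probability is at most $\frac13$, and rescale $\eps$ to conclude $\PPr{\left\lvert\widehat{F_p}-F_p\right\rvert\le\eps\cdot F_p}\ge\frac23$.
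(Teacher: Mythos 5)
Your proposal follows the same high-level \cite{IndykW05} level-set framework as the paper, and the pieces about the random offset $\lambda$ controlling boundary misclassification are essentially identical to the paper's ``randomized boundaries'' step. However, the aggregation strategy you use is genuinely different from the paper's, and as written it has two gaps.

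First, the paper does not split into significant and insignificant level sets at all. It proves a \emph{uniform} per-level-set bound $\lvert\widehat{C_i}-C_i\rvert\le\frac{\eps}{10p\log(nm)}F_p$ for every $i$: the variance $\Var{D_i^{(r)}}$ is bounded using $\frac{|\Gamma_i|}{2^i}\le\phi_i\le 1$, so the same Chebyshev step works whether $\phi_i$ is large or tiny, and the final error comes from summing the $O(\log(nm))$ uniform per-level errors. Your treatment of insignificant levels instead invokes the ``$\FullSampleAndHold$ never overestimates'' property to claim $\widehat{C_i}$ cannot exceed the true windowed mass. That is not sound: $\widehat{C_i}$ is an inverse-probability estimator $\frac{1}{p_\ell}\sum_{j\in S_i^{(r)}}(\widehat{f_j})^p$, and even with perfectly underestimating counters it overshoots whenever the universe subsample $I_\ell^{(r)}$ happens to retain more of $\Gamma_i$ than expected. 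The monotonicity of the counters bounds each $\widehat{f_j}$, not the random cardinality of $\Gamma_i\cap I_\ell^{(r)}$; only a variance/Chebyshev argument (as in the paper) controls that fluctuation.

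Second, your Chebyshev-on-the-aggregate step does not coherently combine with the median. You first argue that the median over $R$ repetitions boosts per-level-set accuracy to $1-1/\polylog(n)$, and then compute $\Var{\widehat{C_i}}$ and $\Var{\sum_i\widehat{C_i}}$ as if $\widehat{C_i}$ were the single-repetition inverse-probability estimator. But $\widehat{C_i}$ is a median, not a sum of independent indicators, so its variance is not what you wrote. Moreover, the variance bound is \emph{conditional} on the good events $\calE_1,\calE_2,\calE_3$; in repetitions where these fail the estimate can be arbitrary, which is exactly why the median is needed in the first place. If you instead drop the median and apply Chebyshev to $\sum_i\widehat{C_i^{(r)}}$ for a single $r$, you would need all the good events to hold simultaneously across all level sets for that $r$, which is a constant-probability union bound that you have not controlled. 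The paper's route — per-level Chebyshev with constant success, median to boost each $\widehat{C_i}$ to $1-1/\polylog(n)$ accuracy, then a union bound over the $O(\log(nm))$ levels and a triangle inequality on $\sum_i\lvert\widehat{C_i}-C_i\rvert$ — sidesteps both of these issues and is the cleaner path.

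On the other hand, your observation that the $\widehat{C_i^{(r)}}$ for distinct level sets are independent is correct (disjoint coordinate sets, independent per-coordinate subsampling, and distinct subroutine instances except near the top), and your refined variance bound $\Var{D_i^{(r)}}=O\!\left(\frac{\eps^2}{\gamma^2\log(nm)}F_pC_i\right)$ is also correct and slightly sharper than the paper's $O\!\left(\frac{\eps^2}{\log(nm)}F_p^2\right)$; this could in principle support an aggregate Chebyshev argument, but only after you replace the median with a mechanism (e.g., truncation of outlying repetitions, or a mean-of-medians analysis) whose variance you can actually bound.
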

\begin{proof}
We would like to show that for each level set $i$, we accurately estimate its contribution $C_i$, i.e., we would like to show $\vert\widehat{C_i}-C_i\rvert\le\frac{\eps}{8\log(nm)}\cdot F_p$ for all $i$. 
For a fixed $i$, recall that $C_i=\sum_{j\in\Gamma i} (f_j)^p$, where $j\in\Gamma_i$ if $(f_j)^p\in\left[\frac{\lambda\cdot\widetilde{F_p}}{2^i},\frac{2\lambda\widetilde{F_p}}{2^i}\right)$. 
On the other hand, $\widehat{C_i}$ is a scaled sum of items $j$ whose estimated frequency satisfies $\left(\widehat{f_j}\right)^p\in\left[\frac{\lambda\cdot\widetilde{M}}{2^i},\frac{2\lambda\cdot\widetilde{M}}{2^i}\right]$. 
Then $j$ could be classified into contributing to $\widehat{C_i}$ even if $j\notin\Gamma_i$. 
Thus we first consider an idealized process where $j$ is correctly classified across all level sets and show that in this idealized process, we achieve a $(1+\cO{\eps})$-approximation to $F_p$. 
We then argue that because we choose $\lambda$ uniformly at random, then only a small number of coordinates will be misclassified and so our approximation guarantee will only slightly degrade, but remain a $(1+\eps)$-approximation to $F_p$. 

\textbf{Idealized process.}
We first show that in a setting where $(\widehat{f_j})^p$ is correctly classified for all $j$, then for a fixed level set $i$, we have $\vert\widehat{C_i}-C_i\rvert\le\frac{\eps}{8\log(nm)}\cdot F_p$ with probability $1-\frac{1}{\poly(nm)}$. 

For a fixed $r\in[R]$, let $\calE_1$ be the event that $|I_i^{(r)}|\le\frac{32n}{2^i}$ and let $\calE_2$ be the event that $F_p(I_i^{(r)})\le\frac{32F_p}{2^i}$. 
Let $\calE_3$ be the event that 
\[\left(1-\frac{\eps}{8\log(nm)}\right)\cdot(f_j)^p\le(\widetilde{f_j})^p\le\left(1+\frac{\eps}{8\log(nm)}\right)\cdot(f_j)^p.\]
Conditioned on $\calE_1$, $\calE_2$, and $j\in I_i^{(r)}$, then we have that $\PPr{\calE_3}\ge\frac{9}{10}$, by \lemref{lem:freq:acc}. 

We define $\widehat{C_i^{(r)}}:=\frac{1}{p_\ell}\sum_{(j,\widehat{f_j})}\in S_i^{(r)}\left(\widehat{f_j}\right)^p$. 
Therefore, we have that $\widehat{C_i}=\median_{r\in[R]}\widehat{C_i^{(r)}}$, recalling $\ell=\max\left(1,i-\flr{\log\frac{\gamma\log(nm)}{\eps^2}}\right)$. 

Conditioned on $\calE_3$, we have
\[\left(1-\frac{\eps}{8\log(nm)}\right)D_i^{(r)}\le\widehat{C_i^{(r)}}\le\left(1+\frac{\eps}{8\log(nm)}\right)D_i^{(r)},\]
where we define
\[D_i^{(r)}=\frac{1}{p_\ell}\sum_{j\in S_i^{(r)}\cap \Gamma_i}(f_j)^p.\]
Note that
\[\Ex{D_i^{(r)}}=\frac{1}{p_\ell}\sum_{j\in\Gamma_i}p_\ell\cdot(f_j)^p=C_i,\]
where we recall that $C_i$ denotes the contribution of level set $\Gamma_i$. 
We also have
\[\Var{D_i^{(r)}}\le\frac{1}{(p_\ell)^2}\sum_{j\in\Gamma_i}p_\ell\cdot(f_j)^{2p}\le\frac{\eps^2}{2^{i-1}\cdot(\gamma^2\log(nm))}\sum_{j\in C_i}(f_j)^{2p}.\]
For $j\in\Gamma_i$, we have $(f_j)^{2p}\le\frac{4\lambda^2(\widetilde{F_p})^2}{2^{2i}}\le\frac{16(F_p)^2}{2^{2i}}$, since $\lambda\le 1$ and $\widetilde{F_p}\le 2F_p$. 
Therefore for sufficiently large $\gamma$, 
\[\Var{D_i^{(r)}}\le\frac{\eps^2|\Gamma_i|}{100p^2\cdot 2^i\cdot\log^2(nm)}(F_p)^2.\]
Since $\frac{|\Gamma_i|}{2^i}\le\phi_i\le 1$, then
\[\Var{D_i^{(r)}}\le\frac{\eps^2}{10000p\log(nm)}\cdot(F_p)^2.\]
Thus by Chebyshev's inequality, we have
\[\PPr{\left\lvert D_i^{(r)}-C_i\right\rvert\ge\frac{\eps}{10p\log(nm)}\cdot F_p}\le\frac{1}{10}.\]
Therefore, 
\[\PPr{\left\lvert\widehat{C_i^{(r)}}-C_i\right\rvert\le\frac{\eps}{10p\log(nm)}\cdot F_p\,\mid\,\calE_1\wedge\calE_2}\ge\frac{4}{5}.\]

To analyze the probability of the events $\calE_1$ and $\calE_2$ occurring, note that in level $\ell$, each item is sampled with probability $2^{-\ell+1}$. 
Hence, 
\[\Ex{|I_\ell^{(r)}|}\le\frac{n}{2^{\ell-1}},\qquad\Ex{F_p(I_\ell^{(r)})}\le\frac{F_p}{2^{\ell-1}}.\]
Since $\calE_1$ is the event that $|I_\ell^{(r)}|\le\frac{32n}{2^\ell}$ and $\calE_2$ is the event that $F_p(I_\ell^{(r)})\le\frac{32F_p}{2^\ell}$, then by Markov's inequality, we have
\[\PPr{E_1}\ge\frac{15}{16},\qquad\PPr{E_2}\ge\frac{15}{16}.\] 
By a union bound,
\[\PPr{\left\lvert\widehat{C_i^{(r)}}-C_i\right\rvert\le\frac{\eps}{10p\log(nm)}\cdot F_p}\ge 0.676.\]

Since $\widehat{C_i}=\median_{r\in[R]}\widehat{C_i^{(r)}}$ over $R=\cO{\log\log n}$ independent instances, then we have
\[\PPr{\left\lvert\widehat{C_i}-C_i\right\rvert\le\frac{\eps}{10p\log(nm)}\cdot F_p}\ge1-\frac{1}{\polylog(n)}.\]
Hence by a union bound over the $p\log(nm)$ level sets,
\begin{align*}
\left\lvert\widehat{F_p}-F_p\right\rvert&=\left\lvert\sum_{i=1}^{p\log(nm)}\widehat{C_i}-\sum_{i=1}^{p\log(nm)} C_i|\right\rvert\le\sum_{i=1}^{p\log(nm)}\left\lvert\widehat{C_i}-C_i|\right\rvert\\
&\le\sum_{i=1}^{p\log(nm)}\frac{\eps}{10p\log(nm)}\cdot F_p\le\frac{\eps}{10}\cdot F_p.
\end{align*}

\textbf{Randomized boundaries.}
Given a fixed $r\in[R]$, we say that an item $j\in[n]$ is misclassified if there exists a level set $\Gamma_i$ such that 
\[(f_j)^p\in\left[\frac{\lambda\cdot\widetilde{F_p}}{2^i},\frac{2\lambda\cdot\widetilde{F_p}}{2^i}\right],\]
but for the estimate $\widetilde{f_j}$, we have either \[(\widetilde{f_j})^p<\frac{\lambda\cdot\widetilde{F_p}}{2^i}\qquad\text{or}\qquad(\widetilde{f_j})^p\ge\frac{2\lambda\cdot\widetilde{F_p}}{2^i}.\] 
By \lemref{lem:hh}, we have that conditioned on $\calE_3$,
\[\left(1-\frac{\eps}{8\log(nm)}\right)\cdot(f_j)^p\le(\widetilde{f_j})^p\le\left(1+\frac{\eps}{8\log(nm)}\right)\cdot(f_j)^p,\]
independently of the choice of $\lambda$. 
Since $\lambda\in\left[\frac{1}{2},1\right]$ is chosen uniformly at random, then the probability that $j\in[n]$ is misclassified is at most $\frac{\eps}{2\log(nm)}$. 

Furthermore, if $j\in[n]$ is misclassified, then it can only be classified into either level set $\Gamma_{i+1}$ or level set $\Gamma_{i-1}$, because $\left(\widehat{f_j}^p\right)$ is a $\left(1+\frac{\eps}{8\log(nm)}\right)$-approximation to $(f_j)^p$. 
Thus, a misclassified index induces at most $2(f_j)^p$ additive error to the contribution of level set $\Gamma_i$. 
In expectation across all $j\in[n]$, the total additive error due to misclassification is at most $F_p\cdot\frac{\eps}{2\log(nm)}$. 
Therefore by Markov's inequality for sufficiently large $n$ and $m$, the total additive error due to misclassification is at most $\frac{\eps}{2}\cdot F_p$ with probability at least $0.999$. 
Hence in total, 
\[\PPr{\left\lvert\widehat{F_p}-F_p\right\rvert\le\eps\cdot F_p}\ge\frac{2}{3}.\]
\end{proof}

Putting things together, we give the full guarantees of our $F_p$ estimation algorithm in \thmref{thm:fp:main}. 

\thmfpmain*
\begin{proof}
The space bound follows from the fact that for $p\in[1,2]$, only $\cO{\frac{\log^{11+3p}(mn)}{\eps^{4+4p}}}$ counters are stored, while for $p>2$, only $\tO{\frac{1}{\eps^{4+4p}} n^{1-2/p}}$ counters are stored. 
The internal state can change each time an item is sampled. 
Since each item of the stream is sampled with probability $\varrho=\frac{\gamma^2n^{1-1/p}\log^4(nm)}{\eps^2 m}$, then with high probability, the total number of internal state changes is $\frac{\gamma^2n^{1-1/p}\log^4(nm)}{\eps^2}$. 

Finally for correctness, we have by \lemref{lem:correctness}, that 
\[\PPr{\left\lvert\widehat{F_p}-F_p\right\rvert\le\eps\cdot F_p}\ge\frac{2}{3}.\]
\end{proof}

\subsection{Entropy Estimation}
In this section, we describe how to estimate the entropy of a stream using a small number of internal state changes. 
Recall that for a frequency vector $f\in\mathbb{R}^n$, the Shannon entropy of $f$ is defined by $H(f)=-\sum_{i=1}^n f_i\log f_i$. 
Observe that any algorithm that obtains a $(1+\cO{\eps})$-multiplicative approximation to the function $h(f)=2^{H(f)}$ also obtains an $\cO{\eps}$-additive approximation of the Shannon entropy $H(f)$, and vice versa. 
Hence to obtain an additive $\eps$-approximation to the Shannon entropy, we describe how to obtain a multiplicative $(1+\eps)$-approximation to $h(f)=2^{H(f)}$. 
\begin{lemma}[\cite{HarveyNO08}]
\lemlab{lem:entropy:reduction}
Given an accuracy parameter $\eps>0$, let $k=\log\frac{1}{\eps}+\log\log m$ and $\eps'=\frac{\eps}{12(k+1)^3\log m}$. 
Then there exists an efficiently computable set $\{p_0,\ldots,p_k\}$ such that $p_i\in(0,2)$ for all $i$, as well as an efficiently computable deterministic function that uses $(1+\eps')$-approximations to $F_{p_i}(f)$ to compute a $(1+\eps)$-approximation to $h(f)=2^{H(f)}$. 
\end{lemma}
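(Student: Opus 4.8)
The plan is to reduce estimating $h(f)=2^{H(f)}$ to estimating a single derivative of the map $g(p):=F_p(f)=\sum_i f_i^p$ at $p=1$, and to estimate that derivative by polynomial interpolation through the nodes $p_0,\dots,p_k$. First I would record the elementary identities. Writing $T:=\sum_i f_i\log f_i$ (the sum over the support, where $1\le f_i\le m$ and there are at most $m$ nonzero coordinates), normalization gives $m\cdot H(f)=m\log m-T$, and since $\frac{d}{dp}f_i^p=f_i^p\ln f_i$ we get $g'(1)=(\ln 2)\,T$. Hence it suffices to approximate $g'(1)$ to additive error $O(\eps m)$: together with the stream length $m=F_1$ (known, or recovered from a $(1+\eps')$-approximation to $F_1$), this yields $H(f)$ to additive error $O(\eps)$, and by the equivalence noted just before the lemma, $h(f)$ to multiplicative error $1+O(\eps)$; rescaling $\eps$ by a constant gives the claim. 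The node set $\{p_0,\dots,p_k\}\subset(0,2)$ is chosen as a scaled set of Chebyshev nodes in a short interval around $1$ of length a small constant times $1/\log m$, and the claimed deterministic function is the composition of: (i) $(\widehat{F_{p_0}},\dots,\widehat{F_{p_k}})\mapsto\widehat{g'(1)}:=\sum_{j=0}^k \widehat{F_{p_j}}\cdot L_j'(1)$, where $L_j$ is the Lagrange basis polynomial for the chosen nodes, so the coefficients $L_j'(1)$ depend only on the nodes and not on $f$; (ii) $\widehat H:=\log m-\widehat{g'(1)}/(m\ln 2)$; (iii) $\widehat h:=2^{\widehat H}$.

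The error in step (i) splits into an interpolation term and an amplification term. Let $P$ be the degree-$\le k$ polynomial interpolating the true values $g(p_0),\dots,g(p_k)$. For the interpolation term, the standard remainder formula $g(p)-P(p)=\frac{g^{(k+1)}(\xi)}{(k+1)!}\prod_j(p-p_j)$ and its derivative at $p=1$ control $|g'(1)-P'(1)|$ by $\max_\xi|g^{(k+1)}(\xi)|/(k+1)!$ times a factor governed by $\prod_j|p-p_j|$ and its derivative; since $g^{(k+1)}(p)=\sum_i f_i^p(\ln f_i)^{k+1}$ and $\sum_i f_i^\xi\ln f_i=O(m\log m)$ on the short interval (using $1\le f_i\le m$), we get $|g^{(k+1)}(\xi)|=O((\ln m)^{k}\,m\log m)$, while the Chebyshev choice of nodes makes $\prod_j|p-p_j|=O((\text{length}/4)^{k+1})$. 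With interval length a small constant times $1/\log m$, the product of these bounds against $1/(k+1)!$ decays geometrically in $k$ and is $\le\eps m$ exactly once $k\gtrsim\log(1/\eps)+\log\log m$ — which is the stated choice of $k$. For the amplification term, linearity of interpolation gives $\widehat{g'(1)}-P'(1)=\sum_j(\widehat{F_{p_j}}-g(p_j))\,L_j'(1)$, where $|\widehat{F_{p_j}}-g(p_j)|\le\eps'\,g(p_j)\le\eps'\,m^{1+O(1/\log m)}=O(\eps' m)$ (by the power-mean bound $F_p\le m^{\max(1,p)}$ over $\le m$ nonzero coordinates); bounding $|L_j'(1)|=O((k+1)^2\log m)$ for Chebyshev nodes on an interval of length $\Theta(1/\log m)$ and summing the $k+1$ terms yields $|\widehat{g'(1)}-P'(1)|=O((k+1)^3\,\eps'\,m\log m)$, so $\eps'=\frac{\eps}{12(k+1)^3\log m}$ makes this $\le\eps m$. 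Adding the two contributions and pushing through steps (ii)--(iii) gives $|\widehat H-H|=O(\eps)$ and hence the multiplicative guarantee on $\widehat h$.

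The step I expect to be the main obstacle is the amplification bound, i.e., controlling $|L_j'(1)|$: evaluating the derivative of an interpolant built from values on a tiny interval is exactly where arbitrary nodes would incur an exponential-in-$k$ blow-up, so the Chebyshev construction (or an equivalent near-optimal choice) is essential, and squeezing out the precise polynomial factor $(k+1)^3$ together with the constant $12$ requires the classical sharp estimates for derivatives of Chebyshev interpolants transported to the scaled interval. This technical core — the choice of nodes, the exact form of the deterministic post-processing function, and the constant-level bookkeeping — is carried out in \cite{HarveyNO08}, and we invoke it directly.
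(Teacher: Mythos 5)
This lemma is a restatement from \cite{HarveyNO08}; the paper does not reprove it, and the text that follows it only records the explicit node set from Section~3.3 of that reference ($\ell=\frac{1}{2(k+1)\log m}$, $p_i = 1+g(\cos(i\pi/k))$) and says that the estimator is obtained by \emph{evaluating} the resulting interpolant $P$ (the paper writes the output as $2^{P(0)}$). Your reconstruction is in the same approximation-theoretic spirit, and your opening identities ($mH(f)=m\log m - T$, $g'(1)=(\ln 2)T$, so $H=\log m - g'(1)/(m\ln 2)$) are correct, but there are two concrete mismatches with the cited construction that make ``invoke \cite{HarveyNO08} directly for the technical core'' inconsistent with the rest of your sketch.

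First, the node interval. You place the Chebyshev nodes in an interval around $1$ of length $\Theta(1/\log m)$, whereas the construction the paper records uses $\ell=\frac{1}{2(k+1)\log m}$, so the nodes sit in an interval of length $\Theta\bigl(1/((k+1)\log m)\bigr)$ --- shorter by a factor of $k+1$. Second, the postprocessing. The paper (following \cite{HarveyNO08}) evaluates the interpolating polynomial at a single prescribed point; your scheme instead differentiates an interpolant of $g(p)=F_p(f)$ at $p=1$. These are genuinely different estimators, and the discrepancy shows up exactly in the amplification bound you flag as the main obstacle. On an interval of length $L$ the Markov inequality plus rescaling gives $|L_j'(1)| \le \cO{k^2}\cdot\cO{1/L}$. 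With the \emph{actual} $L=\Theta\bigl(1/((k+1)\log m)\bigr)$ this is $\cO{(k+1)^3\log m}$, not the $\cO{(k+1)^2\log m}$ you claim; summing over $k+1$ nodes then forces $\eps'=\cO{\eps/((k+1)^4\log m)}$, a factor $k+1$ stronger than the lemma's $\eps'=\frac{\eps}{12(k+1)^3\log m}$. Your $(k+1)^2\log m$ bound implicitly relies on the longer interval you assumed, which is not what \cite{HarveyNO08} uses. (Bernstein in place of Markov does not help here, since the point corresponding to $p=1$ sits within $\Theta(1/k^2)$ of the endpoint of the rescaled $[-1,1]$ interval.)

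The clean fix is simply to state the lemma as a citation and, if a sketch is wanted, reproduce the cited construction: interpolate the (approximate) moment values at the nodes $p_0,\dots,p_k$ given above and \emph{evaluate} the interpolant as \cite{HarveyNO08} prescribe; that evaluation map is the deterministic function the lemma promises. If instead you want to keep your derivative-of-$F_p$ reduction, you must redo the error analysis with the true node spacing and should expect to either strengthen $\eps'$ by a factor of $k+1$ or increase $k$ slightly relative to the lemma's parameters.
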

Section 3.3 in \cite{HarveyNO08} describes how to compute the set $\{p_0,\ldots,p_k\}$ in \lemref{lem:entropy:reduction} as follows. 
We define $\ell=\frac{1}{2(k+1)\log m}$ and the function $g(z)=\frac{\ell(k^2(z-1)+1))}{2k^2+1}$. 
For each $p_i$, we set $p_i=1+g(\cos(i\pi/k))$, which can be efficiently computed. 
Thus, the set $\{p_0,\ldots,p_k\}$ in \lemref{lem:entropy:reduction} can be efficiently computed as pre-processing, assuming that $n$ and $m$ are known a priori. 
Let $P(x)$ be the degree $k$ polynomial interpolated at the points $p_0,\ldots,p_k$, so that $P(p_i)=F_{p_i}(f)$ for each $i\in[k]$, where $F_{p_i}(f)$ is the $(p_i)$-th moment of the frequency vector $f$. 
\cite{HarveyNO08} then showed that a multiplicative $(1+\cO{\eps})$-approximation to $h(f)=2^{H(f)}$ can then be computed from $2^{P(0)}$, and moreover a $(1+\cO{\eps})$-approximation to $2^{P(0)}$ can be computed from $(1+\cO{\eps})$-approximations to $F_{p_i}(f)$, for each $i\in[k]$. 

Thus by \lemref{lem:entropy:reduction} and \thmref{thm:fp:main}, we have:
\begin{theorem}
Given an accuracy parameter $\eps\in(0,1)$, as well as the universe size $n$, and the stream length $m=\poly(n)$, there exists a one-pass insertion-only streaming algorithm that has $\tO{\frac{1}{\eps^{\cO{1}}}\sqrt{n}}$ internal state changes, uses $\cO{\frac{\log^{\cO{1}}(mn)}{\eps^{\cO{1}}}}$ bits of space, and outputs $\widehat{H}$ such that $\PPr{\left\lvert\widehat{H}-H\right\rvert\le\eps}\ge\frac{2}{3}$, where $H$ is the Shannon entropy of the stream. 
\end{theorem}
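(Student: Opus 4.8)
The plan is to run, in parallel, a $(1+\eps')$-approximate $F_p$ estimator for each of the moments $F_{p_0}(f),\ldots,F_{p_k}(f)$ supplied by \lemref{lem:entropy:reduction}, and then apply the deterministic post-processing function of that lemma. First I would instantiate \lemref{lem:entropy:reduction} with accuracy parameter $\eps/c$ for a suitable absolute constant $c>0$, chosen so that a $(1+\cO{\eps/c})$-multiplicative approximation to $h(f)=2^{H(f)}$ yields an $\eps$-additive approximation to $H(f)$ (as noted in the text, multiplicative $(1+\cO{\eps})$ to $2^{H}$ is $\cO{\eps}$-additive to $H$). This produces a set $\{p_0,\ldots,p_k\}\subseteq(0,2)$ with $k=\cO{\log\frac1\eps+\log\log m}$ and $\eps'=\frac{\eps/c}{12(k+1)^3\log m}$. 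Since we are given $m=\poly(n)$, we have $\log m=\cO{\log n}$, hence $k=\cO{\log\frac1\eps+\log\log n}$ and $\frac{1}{\eps'}=\poly\left(\frac1\eps,\log n\right)$; moreover every $p_i\in(0,2)$ is bounded, so the $\cO{1}$-factors hidden in \thmref{thm:small:p} and \thmref{thm:fp:main} are genuinely absolute constants.

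For each $i\in\{0,\ldots,k\}$ I would estimate $F_{p_i}(f)$ to within a $(1+\eps')$ factor by invoking \thmref{thm:small:p} when $p_i\le 1$ and \thmref{thm:fp:main} when $p_i>1$. To drive the per-moment failure probability below $\frac{1}{10(k+1)}$, I would run $\cO{\log k}$ independent copies of each estimator and take the median (equivalently, for \thmref{thm:small:p} one can just set its failure parameter to $\frac{1}{10(k+1)}$, which only costs $\log(k)$ factors). A union bound over the $k+1$ moments then guarantees that all $k+1$ estimates are simultaneously $(1+\eps')$-accurate with probability at least $\frac{2}{3}$. On this event, feeding the estimates into the deterministic function of \lemref{lem:entropy:reduction} returns a $(1+\cO{\eps/c})$-approximation to $h(f)$, and therefore an estimate $\widehat H$ with $|\widehat H-H|\le\eps$.

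It remains to sum up the resources. Each $p_i>1$ instance contributes $\tO{\frac{n^{1-1/p_i}}{(\eps')^{2}}}=\tO{\frac{1}{\eps^{\cO{1}}}\sqrt n}$ state changes — using $n^{1-1/p}\le\sqrt n$ for $p\in[1,2]$ and $\frac1{\eps'}=\poly(\frac1\eps,\log n)$ — and $\cO{\frac{1}{(\eps')^{4+4p_i}}}\polylog(mn)=\cO{\frac{\log^{\cO{1}}(mn)}{\eps^{\cO{1}}}}$ bits of space; each $p_i\le 1$ instance contributes only $\poly\left(\log n,\frac1\eps\right)$ state changes and $\cO{\frac{\log^{\cO{1}}(mn)}{\eps^{\cO{1}}}}$ bits of space by \thmref{thm:small:p}. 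Multiplying by the $\cO{\log k}$ repetitions and summing over the $k+1=\polylog(n/\eps)$ choices of $i$ changes these bounds only by polylogarithmic factors, which are absorbed into the $\tO{\cdot}$ and into the $\cO{1}$ exponents, yielding the claimed $\tO{\frac{1}{\eps^{\cO{1}}}\sqrt n}$ state changes and $\cO{\frac{\log^{\cO{1}}(mn)}{\eps^{\cO{1}}}}$ bits of space.

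There is no real obstacle here beyond bookkeeping; the point that needs care is verifying that shrinking the accuracy from $\eps$ to $\eps'=\eps/\poly(k,\log m)$ inflates the state-change and space costs only polynomially, so that the bounds keep the form $\frac{1}{\eps^{\cO{1}}}$ and $\log^{\cO{1}}(mn)$ — this is exactly where the hypothesis $m=\poly(n)$ enters, letting us convert $\log m$ and $k$ into $\polylog(n)$. The other small subtlety is correctly partitioning the $p_i$ into the regime $p_i\le 1$ (handled by \thmref{thm:small:p}) and $p_i>1$ (handled by \thmref{thm:fp:main}), and observing that $n^{1-1/p_i}\le\sqrt n$ holds uniformly over $p_i\in(1,2)$, so the $\sqrt n$ state-change bound is valid regardless of how close the $p_i$ are to $1$.
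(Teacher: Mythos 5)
Your proposal is correct and follows essentially the same route as the paper, which simply invokes \lemref{lem:entropy:reduction} together with the $F_p$-estimation results and leaves the bookkeeping implicit; you have spelled out the details — instantiating the lemma at accuracy $\eps'$, running a moment estimator for each $p_i$, boosting and union-bounding over the $k+1$ moments, and tracking how $\eps'=\eps/\poly(k,\log m)$ and $m=\poly(n)$ keep the costs of the form $\frac{1}{\eps^{\cO{1}}}$ and $\log^{\cO{1}}(mn)$. One small improvement over the paper's terse citation: you correctly observe that since the $p_i\in(0,2)$ produced by \lemref{lem:entropy:reduction} can fall below $1$, one needs \thmref{thm:small:p} for those indices in addition to \thmref{thm:fp:main}, whereas the paper's proof cites only the latter.
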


\section{Lower Bound}
\seclab{sec:lab}
In this section, we describe our lower bound showing that any streaming algorithm achieving a $(2-\Omega(1))$-approximation to $F_p$ requires at least $\frac{1}{2}n^{1-1/p}$ state updates, regardless of the memory allocated to the algorithm. 
The proof of \thmref{thm:hh:lb} is similar. 
The main idea is that we create two streams $\calS_1$ and $\calS_2$ of length $\cO{n}$ that look similar everywhere except for a random contiguous block $B$ of $n^{1/p}$. 
In $B$, the first stream $\calS_1$ has the same item repeated $n^{1/p}$ times, while the second stream $\calS_2$ has $n^{1/p}$ distinct items each appear once. 
The remaining $n-n^{1/p}$ stream updates of $\calS_1$ and $\calS_2$ are additional distinct items that each appear once, so that $F_p(\calS_1)\ge(2-o(1))\cdot F_p(\calS_2)$ and $F_p(\calS_2)=\Omega(n)$. 
Any algorithm $\calA$ that achieves a $(2-\Omega(1))$-approximation to $F_p$ must distinguish between $\calS_1$ and $\calS_2$ and thus $\calA$ must perform some action on $B$. 
However, $B$ has size $n^{1/p}$ and has random location throughout the stream, so $\calA$ must perform $\Omega(n^{1-1/p})$ state updates. 
\thmfpconslb*
\begin{proof}
Consider the two following possible input streams. 
For the stream $\calS_1$ of length $n$ on universe $n$, we choose a random contiguous block $B$ of $n^{1/p}$ stream updates and set them all equal to the same random universe item $i\in[n]$. 
We randomly choose the remaining $n-n^{1/p}$ updates in the stream so that they are all distinct and none of them are equal to $i$. 
Note that by construction, we have $F_p(\calS_1)=(n-n^{1/p})+(n^{1/p})^p=2n-n^{1/p}$. 
For the stream $\calS_2$ of length $n$ on universe $n$, we choose $\calS_2$ to be a random permutation of $[n]$, so that $F_p(\calS_2)=n$.

For fixed $\eps\in(0,1)$, let $\calA$ be an algorithm that achieves a $2-\eps$ approximation to $F_p$ with probability at least $\frac{2}{3}$, while using fewer than $\frac{1}{2}n^{1-1/p}$ state updates. 
We claim that with probability $\frac{1}{2}$, $\calA$ must have the same internal state before and after $B$ in the stream $\calS_1$. 
Note that we can view each stream update as $(i,j)$ where $i\in[n]$ is the index of the stream update and $j\in[n]$ is the identity of the universe item. 
Observe that for a random stream update $i\in[n]$, a random universe update $j\in[n]$ alters the state of $\calA$ with probability at most $\frac{1}{2n^{1/p}}$, since otherwise for a random stream, the expected number of state changes would be larger than $\frac{n^{1-1/p}}{2}$, which would be a contradiction. 
Since both the choice of $B$ and the element $j\in[n]$ that is repeated $n^{1/p}$ times are chosen uniformly at random, then the expected number of changes of the streaming algorithm on the block $B$ is at most $\frac{n^{1/p}}{2 n^{1/p}}=\frac{1}{2}$. 
Therefore, with probability at least $\frac{1}{2}$, the streaming algorithm's state is the same before and after the block $B$. 

Moreover, the same argument applies to $\calS_2$, and so therefore with probability at least $\frac{1}{2}$, the streaming algorithm cannot distinguish between $\calS_1$ and $\calS_2$ if its internal state only changes fewer than $\frac{n^{1-1/p}}{2}$ times. 
\end{proof}
We now prove \thmref{thm:hh:lb} using a similar approach.
\thmhhlb*
\begin{proof}
Consider the two input streams $\calS_1$ and $\calS_2$ defined as follows.  
We define the stream $\calS_1$ to have length $n$ on a universe of size $n$. 
Similar to before, we choose a random contiguous block $B$ of $\eps\cdot n^{1/p}$ stream updates and set them all equal to the same random universe item $i\in[n]$. 
We randomly choose the remaining $n-n^{1/p}$ updates in the stream so that they are all distinct and none of them are equal to $i$. 
Note that by construction, we have 
\[F_p(\calS_1)=(n-n^{1/p})+(n^{1/p})^p=2n-n^{1/p}.\]
Therefore, the item replicated $\eps\cdot n^{1/p}$ times in block $B$ is an $\frac{\eps}{2}$-heavy hitter with respect to $L_p$. 

We also define the stream $\calS_2$ to have length $n$ on universe $n$. 
As before, we choose $\calS_2$ to be a random permutation of $[n]$, so that  $F_p(\calS_2)=n$.

For fixed $\eps\in(0,1)$, let $\calA$ be an algorithm that solves the $\eps$-heavy hitter problem with respect to $L_p$, with probability at least $\frac{2}{3}$, while using fewer than $\frac{1}{2\eps}n^{1-1/p}$ state updates. 
We claim that with probability $\frac{1}{2}$, $\calA$ must have the same internal state before and after $B$ in the stream $\calS_1$. 

Observe that each stream update can be viewed as $(i,j)$ where $i\in[n]$ is the index of the stream update and $j\in[n]$ is the identity of the universe item. 
For a random universe item $i\in[n]$, the probability that a random stream update $j\in[n]$ alters the state of $\calA$ is at most $\frac{1}{2\eps\cdot n^{1/p}}$, since otherwise for a random stream, the expected number of state changes would be larger than $\frac{n^{1-1/p}}{2\eps}$, which would be a contradiction. 
Because both the choice of $B$ is uniformly and the element $j\in[n]$ that is repeated $\eps\cdot n^{1/p}$ times are chosen uniformly at random, then the expected number of changes of the streaming algorithm on the block $B$ is at most $\frac{\eps\cdot n^{1/p}}{2\eps\cdot n^{1/p}}=\frac{1}{2}$. 
Hence, the streaming algorithm's state is the same before and after the block $B$, with probability at least $\frac{1}{2}$. 

However, the same argument applies to $\calS_2$. 
Thus with probability at least $\frac{1}{2}$, the streaming algorithm cannot distinguish between $\calS_1$ and $\calS_2$ if its internal state only changes fewer than $\frac{n^{1-1/p}}{2\eps}$ times. 
Therefore, any algorithm that solves the $L_p$-heavy hitter problem with threshold $\eps$ with probability at least $\frac{2}{3}$ requires at least $\frac{1}{2\eps}n^{1-1/p}$ state updates.
\end{proof}

\section*{Acknowledgements}
We would like to thank Mark Braverman for asking questions related to the ones we studied in this work. 
David P. Woodruff would like to thank Google Research and the Simons Institute for the Theory of Computing, where part of this work was done, as well as a Simons Investigator Award. 
David P. Woodruff and Samson Zhou are supported in part by NSF CCF-2335411. 
Part of this work was done while Samson Zhou was a postdoctoral researcher at UC Berkeley. 

\def\shortbib{0}
\bibliographystyle{alpha}
\bibliography{references}
\end{document}